\pgfplotsset{compat=newest}
\newcolumntype{C}[1]{>{\centering\arraybackslash}p{#1}}
\newtheorem{theorem}{Theorem}
\newtheorem{lemma}{Lemma}
\newtheorem{remark}{Remark}
\newtheorem{assumption}{Assumption}
\newlength\figureheight
\newlength\figurewidth
\DeclareMathOperator{\1}{\textbf{1}} 
\DeclareMathOperator{\0}{\textbf{0}} 
\DeclareMathOperator*{\col}{col}
\DeclareMathOperator*{\tr}{tr}
\DeclareMathOperator*{\cov}{cov}
\DeclareMathOperator*{\rank}{rank}
\tikzstyle{sensor} = [draw, fill=blue!20, rectangle, rounded corners,
\tikzstyle{est} = [draw, fill=orange!20, rectangle, rounded corners,
\tikzstyle{pinstyle} = [pin edge={to-,thin,black}]
\title{A Framework for Distributed Estimation with Reduced Communication via Event-Based Strategies}
\author{Jiaqi Yan, Yilin Mo, and Hideaki Ishii
	\thanks{
		J. Yan and H. Ishii are with the Department of Computer Science, Tokyo Institute of Technology, Japan. Emails: {jyan@sc.dis.titech.ac.jp, ishii@c.titech.ac.jp}. Yilin Mo is with the Department of Automation, BNRist, Tsinghua University, P. R. China. Email: ylmo@tsinghua.edu.cn.
	}
\thanks{This work was supported in the part by JSPS under Grants-in-Aid for Scientific Research Grant No. 18H01460 and 21F40376.}
}
\begin{document}
\maketitle

\begin{abstract}
In this paper, we consider the problem of distributed estimation in a sensor network, where multiple sensors are deployed to estimate the state of a linear time-invariant (LTI) Gaussian system. By losslessly decomposing the Kalman filter, a framework of event-based distributed estimation is developed, where each sensor node runs a local filter using solely its own measurement, alongside with an event-based synchronization algorithm to fuse the neighboring information. One novelty of the proposed framework is that it decouples the local filter from synchronization process. By doing so, we prove that a general class of triggering strategies can be applied in our framework, which yields stable distributed estimators under the minimal requirements of network connectivity and collective system observability. Moreover, the developed results can be generalized to achieve a distributed implementation of any Luenberger observer. By solving a semi-definite programming (SDP), we further present a low-rank estimator design to obtain the optimal gain of Luenberger observer such that the distributed estimation is realized under the constraint of message size. Therefore, as compared with existing works, the proposed algorithm enjoys lower data size for each transmission. Numerical examples are finally provided to demonstrate the proposed methods.
\end{abstract}

\begin{IEEEkeywords}
Distributed estimation, Event-triggered control, Stochastic linear systems synchronization, Low-rank estimator design.
\end{IEEEkeywords}

\section{Introduction}
State estimation in a sensor network, as one of the most important focuses in the past couple of decades, has attracted significant research attention due to its wide applications in environment monitoring, target tracking, robotics navigation, 
\textit{etc.} (see \cite{subbotin2009design,xie2012fully,luo2005universal,vu2015distributed,jia2016cooperative,yan2021resilient} for examples). However, the classical centralized estimators are no longer suitable in many networks where the data size increases rapidly and collecting all information in a data center becomes difficult. As such, distributed estimation algorithms are required where every sensor produces local estimates using its own measurements and information exchange with only immediate neighbors.

Within this field, a fundamental problem is to estimate the state of a linear time-invariant (LTI) Gaussian system by using multiple sensors. Obviously, the optimal solution of it is provided by 
the centralized Kalman filter (\hspace{1pt}\cite{anderson2012optimal}). In order to achieve a distributed implementation of Kalman filter, a number of consensus-based distributed estimators have been proposed in the literature including  \cite{bar1986effect,olfati2005distributed,olfati2007distributed,chen2019distributed,olfati2009kalman,battistelli2014consensus,li2011consensus,battistelli2016stability,del2009distributed,kar2010gossip,battistelli2014kullback,farina2010distributed,farina2010distributed}.
For example, by performing average consensus on local estimates, a Kalman-Consensus Filter (KCF) is proposed in \cite{olfati2007distributed}. This work has also inspired a group of distributed estimators where local estimates are fused by consensus algorithms (\hspace{1pt}\cite{chen2019distributed,olfati2009kalman,farina2010distributed}). In contrast, Battistelli \textit{et al.} \cite{battistelli2014consensus} suggest performing consensus algorithms on both measurements and inverse-covariance matrices. The developed estimator can guarantee the stability of estimation error even when the system has nonlinear dynamics. Other popular approaches include \cite{chen2002estimation} and \cite{battistelli2014kullback}, where the distributed estimators are established by performing consensus on information matrices and probability densities, respectively.

In the aforementioned works, the distributed estimators require at least one consensus step during each sampling period. However, the autonomous agents are often equipped with embedded microprocessors, onboard communication and actuation modules that are powered by batteries and thus have limited energy resources. In this respect, it is not surprising that event-triggered data transmission policies have recently been popular for their capabilities of improving the
resource utilization efficiency (\hspace{1pt}\cite{kadowaki2014event,nowzari2019event,dimarogonas2011distributed,garcia2014decentralized,yi2018dynamic,mishra2021dynamic}). Along this line, some event-triggered transmission strategies have been proposed for realizing distributed estimation. Particularly, the information flow of the existing works is illustrated in Fig.~\ref{fig:existing}. Notice that $\Delta_i(k_s^i)$ in the figure represents the message transmitted by sensor $i$. Consensus algorithms are performed on $\Delta_i(k_s^i)$ towards achieving stable distributed estimators. In the literature, the transmission instants of $\Delta_i(k_s^i)$ can be determined by state-based triggering strategies \cite{liu2018event,battistelli2018distributed,yu2020event,wu2016finite}, measurement-based triggering strategies \cite{song2019event,shi2016event}, or innovation-based triggering strategies \cite{liu2015event,yan2014distributed}. It is noticed from the figure that the local filters are usually coupled with consensus/synchronization processes in existing works. As such, the performance of both processes are inevitably affected by the triggering mechanism, bringing more challenges to analyze and guarantee the performance of distributed estimators. Therefore, existing works usually have additional requirements on network topology or system matrix to ensure the stability of local estimates. For example,
Battistelli \textit{et al.}	\cite{battistelli2018distributed} propose a consensus-based distributed Kalman filter, where the triggering instants are determined by both the state estimates and error covariance. Under the assumption that the system matrix is invertible and the communication topology is strongly connected, they prove the mean-square boundedness of the estimation error. Based on a stochastic triggering condition, a minimum mean square error estimator is given in \cite{yu2020event}. Again, if there exist a invertible system matrix and strongly connected communication graph, the distributed estimator is stable with a bounded mean-square estimation error. Other examples include \cite{liu2015event}
and \cite{yan2014distributed}, where the authors respectively develop the event-triggered transmission
strategies based on the distance between the current and latest transmitted innovations. 
Even so, results in this area have been scattered in the literature, which is worth further research efforts.

\begin{figure}
	\centering
\includegraphics[width=0.35\textwidth]{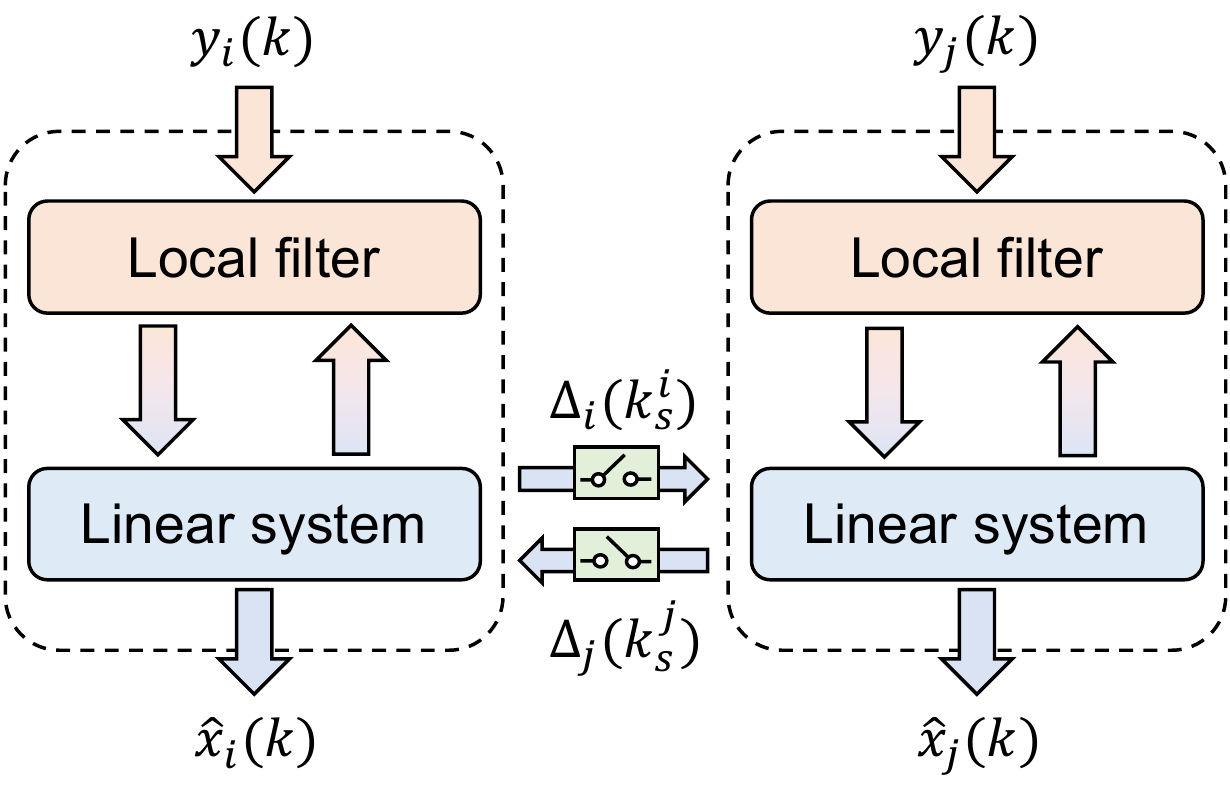}
	\caption{The information flow of most existing event-based distributed estimation algorithms, where sensors $i$ and $j$ are immediate neighbors.}
	\label{fig:existing}
\end{figure}

Inspired by the considerations above, this paper also studies the problem of event-based distributed estimation. We particularly focus on approaching the performance of the steady-state Kalman filter, which is commonly used in practice and has the identical asymptotic estimation performance as the one with time-varying estimation gain \cite{shi2011sensor}. In contrast to the existing solutions, a novel estimation framework will be presented alongside with event-based communication strategies, which decouples the local filters from the consensus process by performing a decomposition of the Kalman filter. The main contributions of this paper are described as follows: 

1) By decomposing the Kalman filter, we prove that the optimal Kalman estimate can be perfectly recovered as a weighted sum of a bank of local filters. Accordingly, this paper presents a novel framework for the event-based distributed implementation of the Kalman filter. Here, each sensor performs local filtering solely with its own measurement based on a decomposition of Kalman filter, and global fusion is realized through information exchange by running an event-based synchronization algorithm. With this framework, we decouple the local filter from the consensus process and reformulate the problem of distributed state estimation into that of synchronization among stochastic linear systems. As a result of decoupling, the performance of local filters will not be affected even when no sensors are triggered to transmit at certain times.

2) Instead of working with any specific triggering function, we show that a general class of triggering strategies can be applied to solve the problem of distributed estimation under conditions on Laplacian and system matrix. The estimation performance is further analyzed, where the presence of noises and event-triggered mechanisms prevent the approaches of Lyapunov stability for deterministic systems from being directly applied. To solve this problem, we extend the classical results on the stability of supermartingale and propose a $c$-martingale convergence lemma. Based on it, the proposed estimator is proved to be stable at each sensor side under the minimal requirements of network connectivity and collective system observability. This extends, in a non-trivial way, the results in our previous work \cite{yan2021distributed} for the full transmission case. 

3)  By running the proposed algorithm, at each triggering instant, the size of message exchanged between paired agents is equal to $\rank(K)\leq \min\{m,n\}$, where $n$ and $m$ are respectively dimensions of the system state and sensor measurement, and $K\in\mathbb{R}^{n\times m}$ is the steady-state Kalman gain. In contrast, due to the coupling of local filters and the consensus process, existing event-based distributed estimators usually require the message exchange with a larger size in order to account for the performance loss on both processes caused by the intermittent communication. For instance, in the works \cite{liu2018event,battistelli2018distributed,yu2020event,wu2016finite,song2019event,shi2016event,liu2015event,yan2014distributed}, the estimation algorithms require exchange of the information on local covariance matrix which is of size $n^2$ at each transmission. Therefore, as compared with these works, our estimator enjoys lower message complexity\footnote{In this paper, message complexity is defined as the size of message transmitted at each triggering instant. Notice that in practice, it usually costs a fixed number of bits ($8$ or $16$ bits) to transfer a real value. Therefore, data rate increases linearly with the message complexity. 
}.  

4) Notice that in practice, a communication channel in the sensor network is usually limited by a finite bandwidth. Therefore, we further investigate the design of distributed estimators under the constraint of message complexity. To this end, it is shown that the framework proposed in this paper can be generalized to achieve a distributed implementation of any Luenberger observer, which may not necessarily be the Kalman filter. Suppose that the message complexity that the network is willing to tolerate is $\tilde{r}$, where $0<\tilde{r} \leq \min\{m,n\}$. This paper, by solving a semi-definite programming, presents how to design the optimal estimation gain of Luenberger observer such that the distributed estimator can be implemented with message complexity no more than $\tilde{r}$.

The remainder of this paper is organized as follows. Section~\ref{sec:formulation} introduces the system settings and presents the problem of distributed estimation. A decomposition of the Kalman filter is introduced in Section~\ref{sec:decompose}, based on which Section~\ref{sec:implement} presents the framework of distributed estimation with an event-triggered communication strategy. The performance of the proposed estimation algorithm is also analyzed in Section~\ref{sec:analysis}. We then discuss how to design the estimation gain under the constraint of message complexity in Section \ref{sec:Kdesign} and validate the algorithm performance through numerical examples in Section \ref{sec:simulation}. Finally, we conclude this work in Section~\ref{sec:conclusion}.

A preliminary version of this paper has been submitted for conference presentation \cite{yan2022necsys}. As compared to it, the current version presents a different decomposition method of the Kalman filter and further proposes the low rank estimator design to reduce the message complexity. Also, we present all the proofs as well as more extensive discussions and numerical examples here.

\textit{Notations}: For a group of vectors $v_{i} \in \mathbb{R}^{m_{i}},$ the vector $\left[v_{1}^{T}, \ldots, v_{N}^{T}\right]^{T}$ is also written as $\col(v_{1}, \ldots, v_{N}).$ We denote by $\rho(A)$ the spectral radius of any matrix $A$. Moreover, given a positive semidefinite matrix $U$, let $U^{1/2}$ be the positive semidefinite matrix that satisfies $U^{1/2}U^{1/2}=U$.

\section{Problem Formulation}\label{sec:formulation}
\subsection{System setup for distributed estimator}
In this paper, we consider the LTI Gaussian system as given below:
\begin{equation}\label{eqn:plant}
x(k+1) = Ax(k) + w(k),
\end{equation}
where $x(k)\in\mathbb{R}^n$ is the system state to be estimated, $w(k)\sim \mathcal{N} (0, Q )$ is an independent and identically distributed (i.i.d.) Gaussian noise with zero mean and covariance matrix $Q\geq 0$. Moreover, the initial state $x(0)$ also follows the Gaussian distribution which has zero mean. 

A sensor network monitors the system above, where the measurement from each sensor $i\in\{1,2,...,m\}$ is given by\footnote{If sensor $i$ outputs a vector measurement, we can treat each of the components as a scalar measurement.}
\begin{equation}\label{eqn:sensoroutput}
y_i(k) = C_ix(k) + v_i(k),
\end{equation}
where $y_i(k)\in\mathbb{R}$ is the measurement of sensor $i$ and $C_i\in\mathbb{R}^{1\times n}$.

By collecting the measurements from all sensors, we have
\begin{equation}\label{eqn:sensormatrix}
y(k) = Cx(k) + v(k),
\end{equation}
where
\begin{equation*}
\begin{split}
y(k) \triangleq {\left[\begin{array}{c}
	y_{1}(k) \\
	\vdots \\
	y_{m}(k)
	\end{array}\right],} \; C \triangleq {\left[\begin{array}{c}
	C_{1} \\
	\vdots \\
	C_{m}
	\end{array}\right],} \;
v(k) \triangleq {\left[\begin{array}{c}
	v_{1}(k) \\
	\vdots \\
	v_{m}(k)
	\end{array}\right]},
\end{split}
\end{equation*}
and $v(k)$ is a zero-mean i.i.d. Gaussian noise with covariance $R\geq0$ and is independent of $w(k)$ and $x(0)$. 

The system \eqref{eqn:plant} need not be stable, but throughout this paper, we make the following assumption on system observability:
\begin{assumption}[Collective observability]\label{assup:observable}
	The system is collectively observable, i.e., the pair $(A, C)$ is
	observable, while $(A, C_i)$ is not necessarily observable for each sensor $i\in\{1,\cdots,m\}$.
\end{assumption}

Notice that Assumption~\ref{assup:observable} requires that the measurements from all sensors jointly guarantee the system observability, while for a single sensor, it may not be able to observe the whole state space.

In this paper, we aim to design a distributed algorithm to estimate the system state by the sensor network, which operates over the communication topology modeled by a connected undirected graph $\mathcal{G}=(\mathcal{V},\mathcal{E})$. Here, $\mathcal{V} =\{1,2,...,m\}$ and $\mathcal{E}\subset \mathcal{V}\times\mathcal{V}$ are the set of sensors and edges, respectively. Moreover, the interaction among sensors is described by the weighted adjacency matrix $\mathcal{A}=\left[a_{i j}\right]$, where $a_{ij}\geq 0$ and $a_{ij}=a_{ji},\forall i,j \in \mathcal{V}$. Notice that $(i,j)\in \mathcal{E}$ if and only if $a_{ij}>0$. The degree matrix of  $\mathcal{G}$ is defined as $\mathcal{D}_{\mathcal{G}} \triangleq \diag\left(d_{1}, \ldots, d_{m}\right)$, where $d_{i}=\sum_{j=1}^{m} a_{ij}$. The Laplacian matrix of $\mathcal{G}$ is calculated as $\mathcal{L}_{\mathcal{G}}\triangleq\mathcal{D}_{\mathcal{G}}-\mathcal{A}$. Since $\mathcal{G}$ is connected, let us arrange the eigenvalues of $\mathcal{L}_{\mathcal{G}}$ as 
\begin{equation}\label{eqn:mu}
	0=\mu_1< \mu_2 \leq \cdots \leq \mu_m.
\end{equation}

\subsection{Fundamental limit: Kalman filter}
It is well known that if the measurements from all sensors can be collected by a single fusion center, then the centralized Kalman filter provides the optimal estimate in the sense that the trace of estimation error covariance is minimized. Therefore, the Kalman estimate acts as the fundamental limitation for all estimation schemes and will be briefly reviewed in this part. 

Let $P(k)$ be the error covariance of Kalman estimate at time $k$. Under
Assumption~\ref{assup:observable}, the error covariance will converge to the steady state exponentially fast (\cite{anderson2012optimal}):
\begin{align}
	P=\lim _{k \rightarrow \infty} P(k). \label{eqn:KFcov}
\end{align}
Since a sensor network typically operates for a long period of time, we consider the steady-state Kalman filter, which has the fixed gain
\begin{align}\label{eqn:KFgain}
	K=P C^{T}\left(C P C^{T}+R\right)^{-1}.
\end{align}
By using $K$, the optimal Kalman estimate is calculated recursively as 
\begin{equation}\label{eqnn:optimalest}
	\begin{split}
		\hat{x}(k+1) =(A-KCA)\hat{x}(k)+Ky(k+1).
	\end{split}
\end{equation}

\subsection{Framework of the proposed distributed estimator}
Clearly, Kalman filter is a centralized solution since the optimal estimate \eqref{eqnn:optimalest} fuses the measurements of all sensors. To be suitable in distributed settings, this paper aims to propose a distributed implementation of the Kalman filter such that each sensor can obtain a stable local estimate by communicating with only immediate neighbors. 

Specifically, our distributed estimation algorithm is developed based on a lossless decomposition of the centralized Kalman filter (see Fig.~\ref{fig:infoflow}), and we show that the performance of Kalman filter is equivalent a bank of local filters fused by a weighted sum. In our approach, the distributed estimator is designed as illustrated in Fig.~\ref{fig:blkdiag}. It has two phases, where the first phase implements the local filters solely based on the own measurement of each sensor and the second phase fuses the neighboring states by replacing the weighted sum in Fig.~\ref{fig:infoflow} with a synchronization procedure. In the rest of this paper, we shall detail the framework by respectively introducing the phases of decomposing the Kalman filter and synchronizing the local states.

\begin{figure}
	\centering
	\includegraphics[width=0.49\textwidth]{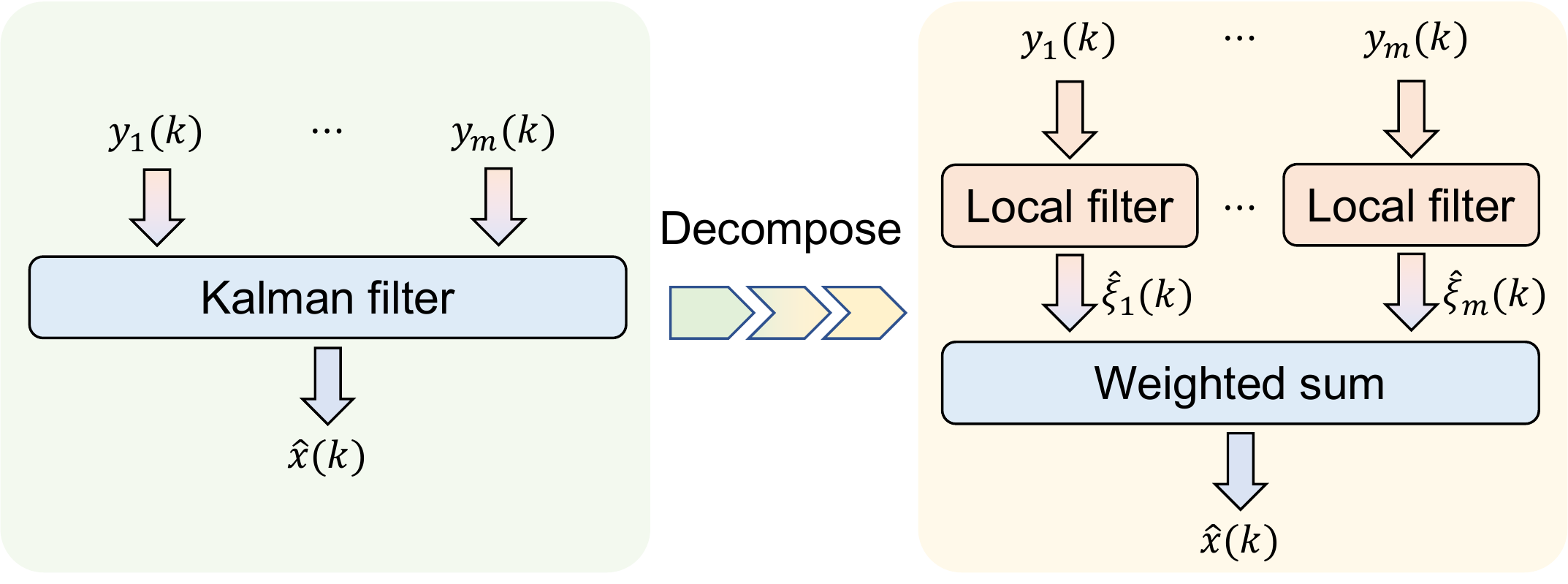}
	\caption{The information flows of centralized Kalman filter (left) and decomposition of Kalman filter \eqref{eqn:localdecompose} (right).}
	\label{fig:infoflow}
\end{figure}


\begin{figure}
	\centering
	\includegraphics[width=0.42\textwidth]{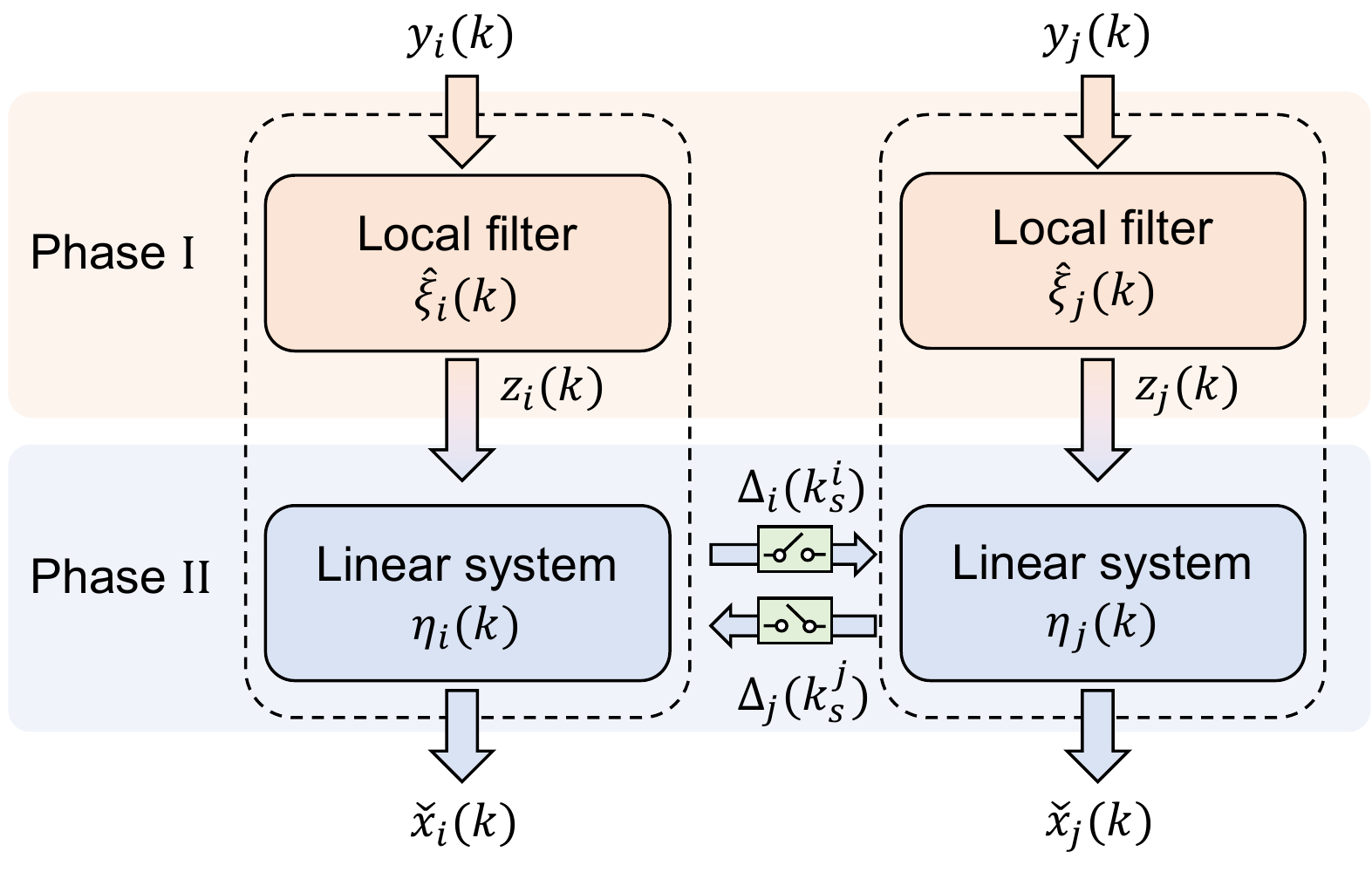}
	\caption{The information flow of the proposed distributed estimation algorithm, where nodes $i$ and $j$ are immediate neighbors.}
	\label{fig:blkdiag}
\end{figure}

\section{A Lossless Decomposition of Kalman Filter}\label{sec:decompose}
This section is devoted to a decomposition of Kalman filter. In particular, the optimal estimate \eqref{eqnn:optimalest} is proved to be a linear combination of a bank of local filters. This result is essential for us to design a framework of distributed estimation later in this paper. Moreover, this section extends in a non-trivial way, the results in \cite{yan2021distributed} by performing model reduction, which results in the decomposition of Kalman filter with lower order. Consequently, the developed distributed estimator enjoys lower message complexity, as will be discussed in Section~\ref{sec:implement}.

To begin with, without loss of any generality, suppose that the system matrix $A$ takes a decomposed form as
\begin{equation}\label{eqn:diagA}
A = \begin{bmatrix}
A^u & \\
& A^s
\end{bmatrix},
\end{equation}
where $A^u\in\mathbb{R}^{n^u\times n^u}$ and $A^s\in\mathbb{R}^{n^s\times n^s}$; any eigenvalue of $A^u$ lies on or outside the unit circle while all the eigenvalues of $A^s$ are strictly inside. We then introduce the following lemmas:
\begin{lemma}[\hspace{1pt}\cite{antsaklis1997linear}]\label{prop:controllable}
	For any $\Lambda\in\mathbb{R}^n$, if $\Lambda$ is non-derogatory\footnote{A matrix is said to be non-derogatory if every eigenvalue of it has geometric multiplicity $1$ \cite{hildebrand1987introduction}.} and in the Jordan form, then $(\Lambda,\,\1_n)$ is controllable.
\end{lemma}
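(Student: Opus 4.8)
The plan is to establish controllability through the Popov--Belevitch--Hautus (PBH) eigenvector test: $(\Lambda,\1_n)$ is controllable if and only if no nonzero left eigenvector $w$ of $\Lambda$ satisfies $w^{T}\1_n=0$. Since $\Lambda$ is given in Jordan form, it is block diagonal, $\Lambda=\diag(J_1,\ldots,J_p)$, where $J_\ell\in\mathbb{R}^{n_\ell\times n_\ell}$ is the Jordan block of the eigenvalue $\lambda_\ell$. Being non-derogatory means that each eigenvalue has geometric multiplicity one, hence there is exactly one Jordan block per distinct eigenvalue; in particular $\lambda_1,\ldots,\lambda_p$ are pairwise distinct and the left eigenspace associated with each $\lambda_\ell$ is one-dimensional.

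The next step is to exhibit the left eigenvectors explicitly. The last row of $J_\ell$ equals $[\,0\ \cdots\ 0\ \lambda_\ell\,]$, so the row vector $w_\ell^{T}$ that takes the value $1$ at the position of the last coordinate of the block $J_\ell$ and $0$ everywhere else obeys $w_\ell^{T}\Lambda=\lambda_\ell w_\ell^{T}$. By the one-dimensionality established above, every left eigenvector of $\Lambda$ is a nonzero scalar multiple of some $w_\ell$ (the argument being carried out over $\mathbb{C}$). Since $w_\ell^{T}\1_n=1\neq0$ for each $\ell$, no left eigenvector annihilates $\1_n$, and the PBH test yields that $(\Lambda,\1_n)$ is controllable.

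I expect the only subtle point --- and the place where the non-derogatory hypothesis is genuinely used --- to be the claim that the $w_\ell$ are \emph{all} the left eigenvectors. If an eigenvalue occupied two or more Jordan blocks, its left eigenspace would have dimension at least two and would contain vectors supported on the last coordinates of several blocks with mixed signs; such a vector can be orthogonal to $\1_n$ (for instance $\Lambda=\diag(1,1)$ admits the left eigenvector $[\,1\ -1\,]$), so controllability would fail. The non-derogatory condition is exactly what rules this out. A more computational alternative would be to form the controllability matrix $[\,\1_n\ \ \Lambda\1_n\ \ \cdots\ \ \Lambda^{n-1}\1_n\,]$ and verify its nonsingularity directly: within each Jordan block the iterates $(J_\ell-\lambda_\ell I)^{k}\1$ form a triangular family whose invertibility is governed by the (nonzero) last entry of $\1$, and distinctness of the $\lambda_\ell$ couples the blocks through a Vandermonde-type argument. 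I would nonetheless keep the PBH argument as the main line, since it avoids these calculations.
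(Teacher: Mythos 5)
The paper states this lemma as a cited textbook result (\cite{antsaklis1997linear}) and gives no proof of its own, so there is nothing to compare against line by line. Your PBH argument is correct and is the standard proof: the non-derogatory hypothesis gives one Jordan block per (distinct) eigenvalue, hence a one-dimensional left eigenspace spanned by the indicator of the last coordinate of that block, which pairs to $1\neq 0$ with $\1_n$; your side remark about why controllability fails with repeated blocks (e.g.\ $\Lambda=\diag(1,1)$ with left eigenvector $[\,1\ {-1}\,]$) correctly identifies where the hypothesis is used.
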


\begin{lemma}[\hspace{1pt}\cite{yan2021distributed}]\label{lmm:observable}
Suppose that $(X,p)$ is controllable, where $X\in \mathbb R^{n\times n}$ and $p \in \mathbb R^n$. For any $q\in\mathbb R^n$, if $X+pq^T$ and $X$ do not share any eigenvalues, then $(X^T+qp^T,q)$ is controllable.
\end{lemma}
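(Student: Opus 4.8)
The plan is to use the Popov–Belevitch–Hautus (PBH) test for controllability. Recall that the pair $(M, b)$ with $M\in\mathbb{R}^{n\times n}$, $b\in\mathbb{R}^n$ is controllable if and only if $\rank[\,\lambda I - M \;\; b\,] = n$ for every $\lambda\in\mathbb{C}$, equivalently there is no left eigenvector $z^T$ of $M$ (i.e. $z^T M = \lambda z^T$, $z\neq 0$) with $z^T b = 0$. Applying this to the pair we must analyze, $(X^T + qp^T,\, q)$, suppose for contradiction there exist $\lambda\in\mathbb{C}$ and $z\neq 0$ with $z^T(X^T + qp^T) = \lambda z^T$ and $z^T q = 0$. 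The term $z^T q p^T$ then vanishes, so $z^T X^T = \lambda z^T$, i.e. $Xz = \lambda \bar{\cdot}$—more precisely, transposing, $X z = \lambda z$ would be the wrong shape; instead $z^T X^T = \lambda z^T$ means $X \bar z$... let me keep it clean: $z^T X^T = \lambda z^T$ says $z$ is a right eigenvector issue—transpose both sides to get $X z = \lambda z$. So $z$ is an eigenvector of $X$ with eigenvalue $\lambda$, and simultaneously $z^T q = 0$.

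Next I would bring in the controllability of $(X, p)$, again via PBH but now in its right-eigenvector form: $(X,p)$ controllable means there is no nonzero right eigenvector... no — the correct dual statement is that there is no nonzero left eigenvector $w^T$ of $X$ with $w^T p = 0$. That does not immediately apply to the $z$ above, which is a right eigenvector. So the key step is to convert: I want to show the eigenvalue $\lambda$ and the vector $z$ force a contradiction with the hypothesis that $X + pq^T$ and $X$ share no eigenvalue. Consider the matrix $X + pq^T$. Using $Xz = \lambda z$ and $q^T z = 0$, compute $(X + pq^T) z = Xz + p(q^T z) = \lambda z + 0 = \lambda z$. Hence $\lambda$ is also an eigenvalue of $X + pq^T$ (with the same eigenvector $z$). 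But $\lambda$ is an eigenvalue of $X$ as well — contradicting the assumption that $X + pq^T$ and $X$ have no common eigenvalue. Therefore no such $(\lambda, z)$ exists, and $(X^T + qp^T, q)$ is controllable.

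The main thing to be careful about — and where I expect the only real subtlety — is the bookkeeping between left and right eigenvectors and between $X$ and $X^T$ in the two PBH applications: the hypothesis "$(X,p)$ controllable" is used in the contrapositive as a statement about $X^T + \,$… actually, as the argument above shows, controllability of $(X,p)$ is not even needed once one notices that $q^T z = 0$ collapses both rank-one updates; the genuine engine is the no-shared-eigenvalue hypothesis. I would double check whether controllability of $(X,p)$ is truly redundant or whether it is needed to rule out the degenerate case $z^T q = 0$ together with $z$ not being an eigenvector of $X^T$ — i.e. the case where $\lambda I - (X^T+qp^T)$ drops rank through a generalized eigenvector / Jordan-block phenomenon rather than a genuine eigenvector. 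Handling that case cleanly is the one place the proof needs attention: one resolves it by noting the PBH rank condition at $\lambda$ is equivalent to existence of a genuine left eigenvector (for the "$b$" being a single column the rank deficiency still yields an actual eigenvector annihilating $q$), so the eigenvector argument above is in fact exhaustive. If a gap remains there, controllability of $(X,p)$ is the natural hypothesis to close it, so I would keep it in reserve for that step.
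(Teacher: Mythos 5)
Your argument is correct, and it is self-contained even though the paper itself only cites this lemma from \cite{yan2021distributed} without reproducing a proof. The PBH step is sound: non-controllability of $(X^T+qp^T,\,q)$ means some nonzero $z\in\mathbb{C}^n$ satisfies $z^T(X^T+qp^T)=\lambda z^T$ and $z^Tq=0$; this is a statement about the left null space of the full $n\times(n+1)$ matrix $[\lambda I-(X^T+qp^T)\;\;q]$, so it always produces a genuine left eigenvector rather than a generalized one --- the Jordan-block worry you flag at the end is a non-issue. From $z^Tq=0$ both rank-one perturbations collapse, giving $Xz=\lambda z$ and $(X+pq^T)z=\lambda z$, so $\lambda$ is a common eigenvalue of $X$ and $X+pq^T$, contradicting the hypothesis. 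Your observation that controllability of $(X,p)$ is never invoked is also correct, and in fact that hypothesis is logically redundant: if $(X,p)$ were not controllable, PBH would give a left eigenvector $w^TX=\mu w^T$ with $w^Tp=0$, whence $w^T(X+pq^T)=\mu w^T$, so $X$ and $X+pq^T$ would share the eigenvalue $\mu$ and the disjoint-spectrum hypothesis would already fail. The only cosmetic issue is the muddled passage about transposing $z^TX^T=\lambda z^T$; tighten that to a single line ($X z=\lambda z$, with plain transposes throughout so no conjugation enters) and the proof is complete.
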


\begin{lemma}[\hspace{1pt}\cite{yan2021distributed}]\label{lmm:regular}
Suppose that $(X,p)$ is controllable, where $X\in \mathbb R^{n\times n}$ and $p \in \mathbb R^n$. Denote the characteristic polynomial of $X$ by $\varphi(s) = \det(sI-X)$. 
	Let $Y\in \mathbb R^{m\times m}$ and $q \in \mathbb R^m$ such that
	$
	\varphi(Y)q = 0.
	$ holds.
	Then there exists $T\in\mathbb{R}^{m\times n}$ which solves the following equations:
	\begin{equation}\label{eqn:T1}
	TX=YT,\;Tp = q.
	\end{equation}
\end{lemma}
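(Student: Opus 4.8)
The plan is to construct $T$ explicitly from the controllability structure of $(X,p)$ and then verify the two identities directly. Since $(X,p)$ is controllable, the controllability matrix $\mathcal{C}_X \triangleq [\,p,\; Xp,\; \dots,\; X^{n-1}p\,]\in\mathbb{R}^{n\times n}$ is invertible, so $p, Xp, \dots, X^{n-1}p$ form a basis of $\mathbb{R}^n$. I would therefore define $T$ to be the unique linear map determined on this basis by
\begin{equation*}
T X^j p = Y^j q, \qquad j = 0,1,\dots,n-1,
\end{equation*}
equivalently, in matrix form, $T = [\,q,\; Yq,\; \dots,\; Y^{n-1}q\,]\,\mathcal{C}_X^{-1}\in\mathbb{R}^{m\times n}$. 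Taking $j=0$ immediately gives $Tp = q$, which is the second required equation in \eqref{eqn:T1}.

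It then remains to check $TX = YT$. Since both sides are linear maps on $\mathbb{R}^n$, it suffices to verify that they agree on the basis $\{X^j p\}_{j=0}^{n-1}$. For $j\in\{0,\dots,n-2\}$ this is immediate: $TX(X^j p) = T X^{j+1}p = Y^{j+1} q = Y\,(Y^j q) = Y\,T(X^j p)$. The only nontrivial case is $j=n-1$, where $X^n p$ lies outside the chosen spanning set and must be re-expressed. Writing $\varphi(s) = s^n + c_{n-1}s^{n-1} + \dots + c_1 s + c_0$, the Cayley--Hamilton theorem gives $X^n p = -\sum_{i=0}^{n-1} c_i X^i p$, hence
\begin{equation*}
TX(X^{n-1}p) = T(X^n p) = -\sum_{i=0}^{n-1} c_i\, T X^i p = -\sum_{i=0}^{n-1} c_i\, Y^i q,
\end{equation*}
while on the other hand $YT(X^{n-1}p) = Y\cdot Y^{n-1} q = Y^n q$. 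Thus the identity at $j=n-1$ is equivalent to $Y^n q + \sum_{i=0}^{n-1} c_i Y^i q = 0$, i.e. to $\varphi(Y)q = 0$, which is precisely the hypothesis. This closes the verification and yields $TX = YT$.

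I do not expect a genuine obstacle: the essential observation — and the only place the hypothesis $\varphi(Y)q=0$ is used — is the consistency check at the top degree $j=n-1$, where Cayley--Hamilton forces a linear relation among $p, Xp,\dots,X^n p$ that the map $T$ is obliged to respect. One minor point worth stating carefully is that $m$ and $n$ need not be related, so $T$ is a genuinely rectangular $m\times n$ matrix and no dimension compatibility beyond that is required; also, only existence is asserted, not uniqueness (which in fact fails in general when $(Y,q)$ is not controllable).
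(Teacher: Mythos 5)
Your proof is correct and complete: the explicit construction $T=[\,q,\;Yq,\;\dots,\;Y^{n-1}q\,]\,\mathcal{C}_X^{-1}$, the verification of $TX=YT$ on the controllability basis $\{X^jp\}_{j=0}^{n-1}$, and the use of Cayley--Hamilton together with the hypothesis $\varphi(Y)q=0$ exactly at the top degree $j=n-1$ constitute the canonical argument, and your closing remarks about the rectangular shape of $T$ and non-uniqueness are also accurate. The paper itself does not reproduce a proof of this lemma --- it defers to \cite{yan2021distributed}, whose ``construction proof'' Remark~1 alludes to --- and your argument is precisely that standard construction, so there is nothing to add.
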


For simplicity, we represent the Kalman gain as 
$$K = [K_1,\cdots,K_m],$$
namely, $K_i\in\mathbb{R}^n$ is the $i$-th column of $K$. Accordingly, let us rewrite the Kalman estimate \eqref{eqnn:optimalest} as
\begin{equation}\label{eqn:optimalKF}
\hat{x}(k+1) = (A-KCA)\hat{x}(k)+\sum_{i=1}^m K_iy_i(k+1).
\end{equation}

Since $(A,C)$ is observable, it is not difficult to conclude that the matrix $A-KCA$ is strictly stable. Then based on it, one can always construct a Jordan matrix $\Lambda\in \mathbb R^{n\times n}$ satisfying that
\begin{enumerate}
\item $\Lambda$ is strictly stable and non-derogatory.
\item The characteristic polynomials of $\Lambda$ and $A-KCA$ are same. 
\end{enumerate} 
By virtue of Lemma~\ref{prop:controllable}, one knows that $(\Lambda, \1_n)$ is controllable. Then, as guaranteed by Lemma~\ref{lmm:regular}, for each $i=1,\cdots,m$, a matrix $F_i\in\mathbb{R}^{n\times n}$ exists which solves the following equations:
\begin{equation}\label{eqn:F_i}
F_i \Lambda = (A-KCA)F_i,\;F_i \mathbf 1_n = K_i.
\end{equation}

\begin{remark}
 The solution of \eqref{eqn:F_i}, i.e., $F_i$, can be obtained by following the construction proof of Lemma~\ref{lmm:regular}, which is provided in \cite{yan2021distributed}.
\end{remark}
Based on the property that $(\Lambda, \1_n)$ is controllable, we would also design $\beta\in \mathbb{R}^n$ and compute
\begin{equation}\label{eqn:Sdef}
S  = \Lambda + \1_n\beta^T
\end{equation}
such that the pair $(S^T, \beta)$ is controllable.
To achieve this, we should use Lemma~\ref{lmm:observable} and design $\beta$ and $S$ to satisfy the following statements:
\begin{enumerate}
	\item Let $\phi(s)$ and $\psi(s)$ be the characteristic polynomials of $S$ and $A^u$, respectively. Then it should hold that $\psi(s)|\phi(s)$. Namely, there exists a polynomial $\sigma(s)$ such that
	\begin{equation}
		\phi(s) = \psi(s) \sigma(s).
	\end{equation} 
	\item Any root of $\sigma(s)$ is strictly within the unit circle but not an eigenvalue of $\Lambda$. 
\end{enumerate}
Therefore, the unstable and stable eigenvalues of $S$ are the roots of $\psi(s)$ and $\sigma(s)$, respectively. Moreover, the unstable ones should be coincide with the eigenvalues of $A^u$, while the stable ones can be freely designed (but should not be the eigenvalues of $\Lambda$). Since $\Lambda$ is strictly stable, we conclude that $S$ does not share any eigenvalues with $\Lambda$. As a result of Lemma~\ref{lmm:observable}, $(S^T,\beta)$ is controllable.

\begin{remark}\label{rmk:betadef}
One can find $\beta$ and $S$ by following the procedure below:
\begin{enumerate}
	\item[i)] Pre-determine the eigenvalues of $S$ such that the unstable eigenvalues are identical to the ones of $A^u$, while the stable ones are not the same with those of $\Lambda$.
	\item[ii)] Through pole placement, find $\beta$ such that the eigenvalues of $\Lambda + \1_n\beta^T$ are placed at the desired locations. Notice that $\beta$ always exists since $(\Lambda,\textbf{1}_n)$ is controllable. 
	\item[iii)] Calculate $S$ by \eqref{eqn:Sdef}.
\end{enumerate}
\end{remark}

Now we are ready to design the local filters, which are performed by each sensor $i$ solely based on its own measurement $y_i(k)$. Notice that the system matrix $A$ may have unstable modes, implying $y_i(k)$ is not necessarily stable\footnote{In this paper, we say a signal is \textit{stable} if the covariance of it is bounded at all time.}, namely, the covariance of it can be unbounded. Therefore, instead of $y_i(k)$, we would like to design the local filter by using a stable signal $z_i(k)$ given as below:
\begin{equation}\label{eqn:xi_z}
\begin{split}
z_i(k) &= y_i(k+1)-\beta^T\hat{\xi}_i(k),\\
\hat{\xi}_i(k+1) &= S \hat{\xi}_i(k)+\1_n z_i(k),
\end{split}
\end{equation}
where $\hat\xi_i(k)$ is the output of the local filter with $\hat{\xi}_i(0) = 0$, and $\beta$ and $S$ are obtained by Remark~\eqref{rmk:betadef}. The following lemma shows that the optimal Kalman filter is indeed a linear combination of the local filters $\hat \xi_i(k), i= 1,\cdots,m$, and thus can be losslessly recovered by them. Moreover, for any sensor $i$, the signal $z_i(k)$ is stable.
For the sake of legibility, the proof is provided in Appendix~\ref{sec:app_localeqv}.
\begin{lemma}\label{lmm:localeqv}
	Suppose that each sensor implements the local filter \eqref{eqn:xi_z}. Then the following statements hold at any $k$: 
	\begin{enumerate}
		\item For any sensor $i$, the covariance of $z_i(k)$ is bounded.
		\item The optimal Kalman estimate \eqref{eqnn:optimalest} can be losslessly recovered from the local estimates $\hat \xi_i(k), i=1,2,\cdots,m,$ by
		\begin{equation}\label{eqn:localdecompose}
		\hat{x}(k) = \sum_{i=1}^{m}F_i \hat\xi_i(k),
		\end{equation}
		where $F_i$ is the solution of \eqref{eqn:F_i}. 
	\end{enumerate}
\end{lemma}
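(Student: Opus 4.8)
The plan is to prove the two claims of Lemma~\ref{lmm:localeqv} in sequence, first establishing the algebraic recovery identity \eqref{eqn:localdecompose} and then using it, together with a change of coordinates, to argue stability of $z_i(k)$.

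\textbf{Step 1: The recovery identity.} I would define the aggregate local estimate $\tilde{x}(k) \triangleq \sum_{i=1}^m F_i \hat\xi_i(k)$ and show by induction on $k$ that $\tilde x(k) = \hat x(k)$, the centralized Kalman estimate. At $k=0$ both are zero since $\hat\xi_i(0)=0$ and $\hat x(0)$ is taken to be zero (consistent with $x(0)$ having zero mean). For the inductive step, I would substitute the local recursion \eqref{eqn:xi_z} and the definition $z_i(k) = y_i(k+1) - \beta^T \hat\xi_i(k)$ into $\tilde x(k+1) = \sum_i F_i \hat\xi_i(k+1) = \sum_i F_i\big(S\hat\xi_i(k) + \1_n z_i(k)\big)$. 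Here is where the carefully engineered identities enter: using $S = \Lambda + \1_n\beta^T$ and $F_i\Lambda = (A-KCA)F_i$, together with $F_i\1_n = K_i$, the term $F_i S \hat\xi_i(k)$ becomes $F_i\Lambda\hat\xi_i(k) + F_i\1_n\beta^T\hat\xi_i(k) = (A-KCA)F_i\hat\xi_i(k) + K_i\beta^T\hat\xi_i(k)$, and the term $F_i\1_n z_i(k) = K_i(y_i(k+1) - \beta^T\hat\xi_i(k))$. The $K_i\beta^T\hat\xi_i(k)$ terms cancel, leaving $\tilde x(k+1) = (A-KCA)\sum_i F_i\hat\xi_i(k) + \sum_i K_i y_i(k+1) = (A-KCA)\tilde x(k) + \sum_i K_i y_i(k+1)$, which by the inductive hypothesis and \eqref{eqn:optimalKF} equals $\hat x(k+1)$.

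\textbf{Step 2: Stability of $z_i(k)$.} Since $\hat\xi_i$ is driven linearly by the measurement $y_i$ and $S$ has unstable modes, $\hat\xi_i(k)$ itself need not have bounded covariance; the point is that the particular combination $z_i(k) = y_i(k+1) - \beta^T\hat\xi_i(k)$ is stable. I would analyze the error between $\hat\xi_i(k)$ and an appropriately transformed copy of the true state. Because $(S^T,\beta)$ is controllable and the unstable eigenvalues of $S$ coincide with those of $A^u$ (hence of $A$), Lemma~\ref{lmm:regular} (applied in the transposed setting, matching characteristic polynomials via the divisibility $\psi(s)\mid\phi(s)$) furnishes a matrix $T$ with $TA = S^T{}^{?}$... more precisely I would look for $T$ intertwining $A$ and $S$ on the unstable subspace so that $\beta^T T$ reproduces the relevant output map $C_i$ up to stable dynamics; then $e_i(k) \triangleq \hat\xi_i(k) - Tx(k)$ (or a suitable affine version) evolves as $e_i(k+1) = S e_i(k) + (\text{noise terms})$ but with the unstable part of the dynamics killed by the intertwining, so $e_i$ is governed by the stable spectrum only and has bounded covariance. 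Consequently $z_i(k) = C_i x(k+1) + v_i(k+1) - \beta^T\hat\xi_i(k)$, written in terms of $e_i$ and the stable components of $x$, inherits bounded covariance. The unstable modes of $x$ cancel by construction because $\beta^T T$ was chosen to match $C_i$ on exactly the unstable part.

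\textbf{Main obstacle.} The routine part is Step~1 — it is a direct, if notation-heavy, induction. The delicate part is Step~2: making precise the intertwining map that certifies cancellation of the unstable modes in $z_i(k)$, and verifying it genuinely exists for \emph{every} sensor $i$ (including those for which $(A,C_i)$ is unobservable, where the intertwining $T$ will be non-injective and one must argue its stable kernel does no harm). I expect the cleanest route is to realize $z_i$ and $\hat\xi_i$ jointly as the output and state of a stable filter driven by white noise, decoupling the unstable dynamics of $x$ via the block structure \eqref{eqn:diagA} and the pole-placement construction of Remark~\ref{rmk:betadef}; getting the bookkeeping of this decoupling exactly right, rather than any single estimate, is where the care is needed.
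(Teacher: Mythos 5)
Your proposal follows essentially the same route as the paper: Step~1 is the paper's argument (your cancellation of the $K_i\beta^T\hat\xi_i(k)$ terms is equivalent to rewriting the filter as $\hat\xi_i(k+1)=\Lambda\hat\xi_i(k)+\1_n y_i(k+1)$ and multiplying by $F_i$), and Step~2's intertwining map is exactly the paper's $G_i=[G_i^u\;\;0]$, obtained from Lemma~\ref{lmm:regular} applied to the controllable pair $(S^T,\beta)$ with $Y=(A^u)^T$ and $q=(C_i^uA^u)^T$, admissible precisely because $\psi(s)\mid\phi(s)$. One small correction to your sketch: the error $\epsilon_i(k)=G_ix(k)-\hat\xi_i(k)$ evolves under the strictly stable $\Lambda$ (not under $S$ with its unstable part ``killed''), driven by the bounded signal $C_i^sA^sx^s(k)$ and the noises, so $z_i(k)=\beta^T\epsilon_i(k)+C_i^sA^sx^s(k)+C_iw(k)+v_i(k+1)$ has bounded covariance; also, your worry about sensors with unobservable $(A,C_i)$ is moot, since the construction uses only controllability of $(S^T,\beta)$ and the characteristic-polynomial condition, not observability of any individual pair.
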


The information flows of the Kalman filter \eqref{eqnn:optimalest} and the decomposition of it \eqref{eqn:localdecompose} are depicted in Fig.~\ref{fig:infoflow}. Notice that the decomposition here is based on the approach of \cite{mo2016secure}.
There, the local filters are to be used in a centralized manner, allowing a simpler structure
with $\beta=\0_{n}$. The local filters of the form \eqref{eqn:xi_z} was proposed in \cite{yan2021distributed} for distributed estimation. Lemma~\ref{lmm:localeqv} is included in the results there, but not explicitly. Here we present the proof for the sake of completeness.

\begin{remark}
We further remark that Lemma~\ref{lmm:localeqv}, by following a similar procedure, can be extended to any stable Luenberger estimators which may not necessarily be the Kalman filter. Also, the results can be generalized to consider other noise models. For example, we can decompose the $H_\infty$ estimator in the presence of bounded noise. 
\end{remark}


\begin{remark}\label{rmk:z}
We also note that it is essential to design the local filter \eqref{eqn:xi_z} with stable input $z_i(k)$, since the boundedness of its covariance is key to guarantee the synchronization among local states. This will further help to establish the stability of local estimators as we will see later in Theorem~\ref{thm:synchronization}.
\end{remark}

We next present \eqref{eqn:localdecompose} in a matrix form. To this end, let us denote by $r$ the rank of Kalman gain $K\in\mathbb{R}^{n\times m}$, namely,
\begin{equation}\label{eqn:rdef}
r \triangleq \rank(K) \leq \min\{m,n\}.
\end{equation}
As a result, there exists a matrix $V\in \mathbb R^{r\times m}$ which is of rank $r$ such that $K$ can be decomposed as
\begin{equation}\label{eqn:Kdecompose}
\begin{aligned}
K&=\begin{bmatrix}
K_1&\cdots&K_m	
\end{bmatrix}= \begin{bmatrix}
\tilde K_1&\cdots&\tilde K_r	
\end{bmatrix} V=\tilde KV,
\end{aligned}
\end{equation}
where $\tilde K_i, \;i=1,\cdots,r,$ are linearly independent. Then by \eqref{eqn:localdecompose}, let us rewrite the Kalman estimate as below:
\begin{equation}\label{eqn:kf_matrix}
\begin{aligned}
&\;\quad\hat{x}(k+1)=\sum_{i=1}^m F_i \hat \xi_i(k+1) \\&= \sum_{i=1}^m F_i (\Lambda+\mathbf 1\beta^T)\hat \xi_i(k) + \sum_{i=1}^m F_i \1_n z_i(k) \\
&= (A-KCA)\sum_{i=1}^m F_i \hat \xi_i(k) + \sum_{i=1}^m K_i \beta^T \hat \xi_i(k) +\sum_{i=1}^m K_i z_i(k),
\end{aligned}
\end{equation}
where the second and third equalities hold respectively by \eqref{eqn:xi_z} and \eqref{eqn:F_i}. We next consider the second term of RHS, i.e., $\sum_{i=1}^m K_i \beta^T \hat \xi_i(k)$. As a result of \eqref{eqn:Kdecompose}, it follows that 
\begin{equation}\label{eqn:titleK}
\sum_{i=1}^m K_i \beta^T \hat \xi_i(k)= \sum_{i=1}^r\tilde K_i \beta^T \sum_{j=1}^m v_{ij}\hat \xi_j(k),
\end{equation}
where $v_{ij}$ is the $(i,j)$th entry of $V$. Moreover, it follows from \eqref{eqn:xi_z} that
\begin{equation}\label{eqn:vij}
\sum_{j=1}^m v_{ij}\hat \xi_j(k+1) = S\sum_{j=1}^m v_{ij}\hat \xi_j(k)+\sum_{j=1}^m v_{ij}\1_nz_i(k).
\end{equation}

To simplify notations, let us denote 
\begin{equation}\label{eqn:vartheta}
\vartheta(k)\triangleq \begin{bmatrix}
\sum_{i=1}^m F_i \hat \xi_i(k)\\
\sum_{j=1}^mv_{1j}\hat \xi_j(k)\\
\vdots\\
\sum_{j=1}^mv_{rj}\hat \xi_j(k)\\
\end{bmatrix}\in\mathbb{R}^{n(r+1)}.
\end{equation}
It thus follows from \eqref{eqn:kf_matrix}--\eqref{eqn:vij} that
\begin{equation}\label{eqn:theta}
\vartheta(k+1)=H\vartheta(k)+
Lz(k),
\end{equation}	
where\begin{equation}\label{eqn:S}
\begin{split}
&H\triangleq \begin{bmatrix}
A-KCA &\tilde K_1\beta^T&\cdots &\tilde K_r \beta^T\\	
&S&&\\
&&\ddots&\\
&&&S
\end{bmatrix}\in\mathbb{R}^{n(r+1)\times n(r+1)}, \\
&L \triangleq  \begin{bmatrix}
K\\
V\otimes \1_n
\end{bmatrix}\in\mathbb{R}^{n(r+1)\times m},\\
&z(k)\triangleq \begin{bmatrix}
z_1(k) & \cdots &z_m(k)
\end{bmatrix}^T\in \mathbb{R}^m.
\end{split}
\end{equation}
For convenience, we denote by $L_i$ the $i$-th column of $L$, namely,
\begin{equation}\label{eqn:Li}
L=	\begin{bmatrix}
	L_1&\cdots&L_m	
	\end{bmatrix}.
\end{equation}
In view of \eqref{eqn:kf_matrix}, the optimal Kalman estimate $\hat{x}(k)$ is indeed the vector consisting of the first $n$ entries of $\vartheta(k)$. Therefore, the optimal estimate can be losslessly recovered by \eqref{eqn:xi_z} and \eqref{eqn:theta}, where a center is however required to fuse $\hat{\xi}_i(k)$ and $z_i(k)$ from all sensors. In the rest of this paper, we will show how to use \eqref{eqn:xi_z} and \eqref{eqn:theta} to design a distributed implementation of the Kalman filter.

\begin{remark}
Our decomposition approach here is an extension of \cite{yan2021distributed} where the Kalman gain $K$ is used directly. However, there, to achieve a distributed implementation of it, the message complexity should be $\min\{m,n\}$. In contrast, \eqref{eqn:theta} can be implemented with lower message complexity $r$. Moreover, notice that $r$ is defined as the rank of estimation gain $K$. In general, however, the rank of estimation gain $K$ may not have a lower rank. Later in Section~\ref{sec:Kdesign} of the paper, we will provide a design method to find an estimation gain matrix for a given rank while minimizing the estimation error. This will allow us to reduce the message complexity with some tradeoff in its estimation performance.
\end{remark}

\section{An Event-Based Distributed Implementation of Kalman Filter}\label{sec:implement}
This section is devoted to a distributed implementation of the Kalman filter with reduced communication among sensors. The traditional approaches \cite{olfati2007distributed,olfati2005distributed,olfati2009kalman,battistelli2014consensus,li2011consensus,battistelli2016stability,del2009distributed,kar2010gossip,ma2016gossip,cattivelli2010diffusion,hu2011diffusion,cattivelli2008diffusion,farina2010distributed,haber2013moving} require that each sensor node broadcasts its local information to neighbors at least once during the sampling interval. This inevitably causes a large number of data transmission, which leads to an increased communication burden and a shortened lifetime of the sensor network. From this perspective, this section also presents event-based communication strategies to reduce the transmission frequencies for each sensor node.

\subsection{Framework of the event-based distributed estimation}
We shall leverage the results established in Section~\ref{sec:decompose} to design a distributed estimator with event-based communication strategies. Specifically, based on the decomposition of Kalman filter, for any sensor $i$, the update of it during each sampling period can be divided into two phases. In Phase I, sensor $i$ performs the local filter \eqref{eqn:xi_z} solely using its own measurement without communicating with others. On the other hand, Phase II fuses the neighboring information based on \eqref{eqn:theta}. 

In view of \eqref{eqn:theta}, it is clear that the Kalman estimate fuses $\hat{\xi}_i(k)$ and $z_i(k)$ from all sensors. However, since each local sensor is only capable of accessing the information in its neighborhood, we aim to implement Kalman filter in a distributed fashion by running a synchronization algorithm. For the particular purpose of decreasing the transmission frequency, event-based communication strategies will be adopted.

To be concrete, let each sensor $i$ keep a local state as below:
\begin{equation}\label{eqn:etadef}
\eta_i(k) \triangleq
{\left[\begin{array}{c}
	\eta_{0,i}(k) \\
	\eta_{1, i}(k) \\
	\vdots \\
	\eta_{r, i}(k)
	\end{array}\right]\in\mathbb{R}^{n(r+1)},} 
\end{equation}
where $\eta_{j, i}(k) \in \mathbb{R}^n, j=0,1,\cdots,r$. In order to approach the performance of Kalman filter, the local state will be updated through the following synchronization algorithm:
\begin{equation}\label{eqn:update}
\eta_i(k+1) =H\eta_i(k) +L_iz_i(k)+B\sum_{j=1}^m a_{ij}(\widehat{\Delta}_j(k)-\widehat{\Delta}_i(k)),
\end{equation}
where $\eta_i(0)=0$, $H$ and $L_i$ are respectively defined in \eqref{eqn:S} and \eqref{eqn:Li}, and
\begin{equation}
B = 
\begin{bmatrix}
\0_{n\times r}\\
I_r\otimes \1_n
\end{bmatrix}\in\mathbb{R}^{n(r+1)\times r}.
\end{equation}
Moreover, $\widehat \Delta_{i}(k)\in\mathbb{R}^r$ is the latest information broadcast by sensor $i$, and is calculated by
\begin{equation}
\begin{split}
&\widehat \Delta_{i}(k) = T \widehat \eta_i(k),
\end{split}
\end{equation}
where 
\begin{equation}
\widehat \eta_{i}(k) = H^{(k-k_{s}^{i})}\eta_i(k_{s}^{i}),\; \; k\in [k_{s}^{i}, k_{s+1}^{i}),
\end{equation}
\begin{equation}\label{eqn:T}
T = 
\begin{bmatrix}
\0_{r\times n}& I_r\otimes \Gamma
\end{bmatrix}\in\mathbb{R}^{r\times n(r+1)},
\end{equation}
and $\Gamma\in\mathbb{R}^{1\times n}$ is the parameter to be designed. 

To determine the triggering instants $k_{s}^{i}$, each sensor $i$ considers a triggering function $f_i(k)$ in the following form:
\begin{equation}\label{eqn:triggerfun}
	f_i(k)= ||\epsilon_{i}(k)||^2-h_i(k),
\end{equation}
where
\begin{equation}\label{eqn:epsilon}
	\epsilon_{i}(k)= \widehat{\eta}_i(k)-\eta_i(k), 
\end{equation} 
and $h_i(k)$ is a threshold function as will be discussed later in Section~\ref{sec:trigger}. The sensor updates its local state $\eta_i(k)$ based on \eqref{eqn:update} until the triggering function \eqref{eqn:triggerfun} exceeds $0$. Particularly, once $f_i(k) \geq 0$, agent $i$ will be triggered. It then broadcasts $\Delta_{i}(k)$ to neighbors, resetting $\epsilon_{i}(k)$ to zero. Obviously, the sequence of triggering instants is determined recursively as
\begin{equation}\label{eqn:triggertime}
	k_{s+1}^{i} \triangleq \min \left\{k>k_{s}^{i} \mid f_i(k) \geq 0\right\},\; k_{0}^{i}=0.
\end{equation}
 
\begin{remark}\label{rmk:rankK}
	Notice that, instead of directly transmitting the local state $\widehat \eta_{i}(k)\in\mathbb{R}^{n(r+1)}$, each sensor node broadcasts a ``coded" vector $\widehat \Delta_i(k)\in\mathbb{R}^r$. Therefore, the data size for each transmission is $r=\rank(K)\leq \min\{m,n\}$. As compared with existing works, e.g., \cite{olfati2005distributed,olfati2009kalman,battistelli2014consensus,li2011consensus,battistelli2016stability}, which usually require information exchange on the local covariance matrix of size $n^2$, the proposed algorithm enjoys lower message complexity.
\end{remark} 

By collecting Phases I and II together, the update of any sensor $i$ is summarized in Algorithm~\ref{alg:dist_est}. Fig.~\ref{fig:blkdiag} presents the information flow of Algorithm~\ref{alg:dist_est}, which requires no fusion center and is achieved in a distributed manner. As compared with Fig.~\ref{fig:existing}, the novelty of the proposed algorithm lies in the decoupling of the local filter from the fusion process. Therefore, the communication occurs only in Phase II, and the performance of local filters will not be affected even when no sensors are triggered to transmit at certain times. 

\begin{algorithm}
	1:\: (Phase I) Solely using its own measurement, sensor $i$ computes $z_i(k)$ and updates the state of the local filter by \eqref{eqn:xi_z}.\\
	2:\: (Phase II) By fusing the information most recently received from its neighborhood, sensor $i$ updates $\eta_i(k+1)$ according to the synchronization algorithm \eqref{eqn:update}--\eqref{eqn:T}.\\
	3:\: Sensor $i$ obtains the local estimate as
	\begin{equation}\label{eqn:localfuse}
		\breve{x}_i(k+1)=m\eta_{0,i}(k+1).
	\end{equation}
	4:\: Sensor $i$ checks the triggering function \eqref{eqn:triggerfun}. Once $f_i(k)\geq 0$, it
	broadcasts $\Delta_i(k+1)$ to neighbors. 
	\caption{An event-based distributed estimation algorithm for sensor $i$ at time $k>0$}
	\label{alg:dist_est}
\end{algorithm}

\subsection{A general class of triggering functions}\label{sec:trigger}
An important feature of the triggering function \eqref{eqn:triggerfun} is that it ensures $||\epsilon_{i}(k)||^2$ to be smaller than the threshold $h_i(k)$. This happens because once sensor $i$ finds $f_i(k) \geq 0$, the event is triggered, which resets $\epsilon_{i}(k)=0$. Instead of proposing any specific triggering function, we show that a general class of triggering strategies can be applied in our framework for yielding stable distributed estimates. Specifically, we require that the threshold function $h_i(k)$ is designed such that $||\epsilon_{i}(k)||^2$ is upper bounded by some $\hbar<\infty$, namely,  
\begin{equation}\label{eqn:hbar}
||\epsilon_{i}(k)||^2\leq \hbar, \;\forall k\geq 0.
\end{equation}

We now present several triggering functions that are commonly used in the literature and also detail what $h_i(k)$ is in each case. It is straightforward to show that \eqref{eqn:hbar} is guaranteed.
\begin{enumerate}
	\item Static time-dependent triggering function (\hspace{1pt}\cite{kadowaki2014event}):
	\begin{equation}\label{eqn:trigger1}
		\begin{split}
			h_{i}(k)&=c_0+c_1\alpha^{k},\\
		\end{split}
	\end{equation}
	where $c_0>0$, $c_1\geq 0$, and $\alpha\in(0,1)$.
	\item Static state-dependent triggering function (\hspace{1pt}\cite{yi2018dynamic,mishra2021dynamic}):
	\begin{equation}\label{eqn:trigger2}
		\begin{split}
			\widehat{q}_{i}(k)&=\min \left\{\frac{1}{2} \sum_{j=1}^m a_{ij}\big|\big|\widehat \Delta_j(k_{s}^{j})-\widehat \Delta_i(k_{s}^{i})\big|\big|^{2}, \ell\right\},\\
			h_{i}(k)&=\alpha_{i}(k) \widehat{q}_{i}(k),
		\end{split}
	\end{equation}
where $\ell>0$ and $\alpha_{i}(k)$ takes nonnegative values and exponentially decreases to zero.
	\item Dynamic triggering function (\hspace{1pt}\cite{yi2018dynamic,mishra2021dynamic}):
	\begin{equation}
	\begin{split}
		\chi_{i}(k+1)&=\beta_{i} \chi_{i}(k)+\alpha_{i}(k) \widehat{q}_{i}(k)-||\epsilon_{i}(k)||^{2},\\
h_{i}(k)&= \frac{1}{\theta_{i}}\chi_{i}(k)+\alpha_{i}(k) \widehat{q}_{i}(k),
	\end{split}
	\end{equation}
where $\chi_{i}(0)>0$, $\beta_i\in(0,1)$ and $\theta_{i}>1/\beta_i$. Moreover, $\widehat{q}_{i}(k)$ and $\alpha_{i}(k)$ are defined in \eqref{eqn:trigger2}.
\end{enumerate}

One merit of our framework is that by decoupling the local filters from the fusion process, we can reformulate the problem of distributed estimation to that of stochastic linear systems synchronization. In the next section, we will prove that any event-based algorithm guaranteeing \eqref{eqn:hbar} can facilitate the synchronization of stochastic linear systems, and thus contribute to establish stable distributed estimators.
However, as one might imagine, different triggering functions result in different triggering frequencies and estimation accuracy.
 
\section{Estimation Performance Analysis}\label{sec:analysis}
This section will theoretically analyze the performance of Algorithm~\ref{alg:dist_est}. We remark that in our previous work \cite{yan2021distributed}, we have provided a unified framework for studying the performance of distributed estimators in the scenarios where communication among agents is independent of the system states and sensor measurements. However, in Algorithm~\ref{alg:dist_est}, the communication inevitably relies on these states as it is triggered by certain events depending on them. This prevents the methodologies in \cite{yan2021distributed} from being directly applied. Therefore, in this paper, we would first resort to $c$-martingale convergence lemma (as proposed in Appendix~\ref{sec:app_synchronization}) establishing the mean-squared synchronization of local states of all sensors, namely $\eta_{i}(k)$'s. This result will next be leveraged to prove the stability of the distributed estimators.

\subsection{Synchronization of local states}\label{sec:problem}

In order to show the synchronization among local states, let us introduce the following lemma:

\begin{lemma}\label{lmm:eta}
	Suppose that the Mahler measure\footnote{The Mahler measure of a matrix is defined as the absolute product of its unstable eigenvalues.} of matrix $S$ meets the following condition:
	\begin{equation}\label{eqn:unstable}
		\prod_j |\lambda_j^u(S)| < \frac{1+\mu_2/\mu_m}{1-\mu_2/\mu_m},
	\end{equation}
	where $\lambda_j^u(S)$ represent the unstable eigenvalues of $S$, and $\mu_2$ and $\mu_m$ are defined in \eqref{eqn:mu}. Let
\begin{equation}\label{eqn:Gamma}
	\Gamma=\frac{2}{\mu_2+\mu_m}\frac{\1_n^T\mathcal{P}S}{\1_n^T\mathcal{P}\1_n}\in\mathbb{R}^{1\times n},
\end{equation}
where $\mathcal{P}>0$ solves the following modified algebraic Riccati inequality:
\begin{equation}\label{eqn:riccati}
	\mathcal{P}-S^{T} \mathcal{P} S+\left(1-\zeta^{2}\right) \frac{S^{T} \mathcal{P} \1_n \1_n^{T} \mathcal{P} S}{\1_n^{T} \mathcal{P} \1_n}>0,
\end{equation}
and $\zeta$ satisfies that 
\begin{equation}\label{eqn:zeta}
\prod_{j}\left|\lambda_{j}^{u}(S)\right|<\zeta^{-1} \leq\frac{1+\mu_{2} / \mu_{m}}{1-\mu_{2} / \mu_{m}}.
\end{equation}
Then for any $j\in\{2,...,n\}$, it holds that 
\begin{equation}\label{eqn:Seig}
	\rho(H-\mu_j BT)<1.
\end{equation}
\end{lemma}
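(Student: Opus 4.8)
The plan is to exploit the block-triangular structure of $H-\mu_j BT$ so as to reduce the claim to a single simultaneous-stabilization problem for the matrix $S$, and then to settle that problem with a common quadratic Lyapunov function built from the solution $\mathcal{P}$ of \eqref{eqn:riccati}.

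First I would compute $BT$ from the definitions in \eqref{eqn:S} and \eqref{eqn:T}. Since $B$ has a zero first block-row and $T$ a zero first block-column, $BT$ vanishes everywhere except in its lower-right $nr\times nr$ block, where it equals $(I_r\otimes\1_n)(I_r\otimes\Gamma)=I_r\otimes(\1_n\Gamma)$. Hence $H-\mu_j BT$ is block upper triangular, with $A-KCA$ in the top-left $n\times n$ block and $r$ identical diagonal blocks equal to $S-\mu_j\1_n\Gamma$, so that
\[
\rho(H-\mu_j BT)=\max\bigl\{\rho(A-KCA),\ \rho(S-\mu_j\1_n\Gamma)\bigr\}.
\]
Since $(A,C)$ is observable and $K$ is the steady-state Kalman gain, $A-KCA$ is strictly stable; it therefore remains to prove $\rho(S-\mu_j\1_n\Gamma)<1$ for every $j$, equivalently that the single gain $\Gamma$ of \eqref{eqn:Gamma} renders $S-\mu\1_n\Gamma$ strictly stable for all $\mu\in[\mu_2,\mu_m]$.

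For this, set $\theta\triangleq\1_n^T\mathcal{P}\1_n>0$, $g_j\triangleq 2\mu_j/(\mu_2+\mu_m)$, and $\delta\triangleq(\mu_m-\mu_2)/(\mu_m+\mu_2)\in[0,1)$, so that $g_j\in[1-\delta,\,1+\delta]$ and $\mu_j\1_n\Gamma=(g_j/\theta)\,\1_n\1_n^T\mathcal{P}S$, i.e.\ $M_j\triangleq S-\mu_j\1_n\Gamma=(I-\tfrac{g_j}{\theta}\1_n\1_n^T\mathcal{P})S$. A direct expansion then gives
\[
M_j^T\mathcal{P}M_j=S^T\mathcal{P}S-\frac{g_j(2-g_j)}{\theta}\,S^T\mathcal{P}\1_n\1_n^T\mathcal{P}S .
\]
Now $g_j(2-g_j)=1-(1-g_j)^2$ is minimized over $[1-\delta,1+\delta]$ at its endpoints, so $g_j(2-g_j)\ge 1-\delta^2$; and \eqref{eqn:zeta} forces $\zeta\ge\delta$, hence $1-\delta^2\ge 1-\zeta^2>0$. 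Since $S^T\mathcal{P}\1_n\1_n^T\mathcal{P}S\succeq0$, this yields $M_j^T\mathcal{P}M_j\preceq S^T\mathcal{P}S-\tfrac{1-\zeta^2}{\theta}\,S^T\mathcal{P}\1_n\1_n^T\mathcal{P}S\prec\mathcal{P}$, the strict inequality being exactly \eqref{eqn:riccati}. The discrete-time Lyapunov theorem then gives $\rho(M_j)<1$, which establishes \eqref{eqn:Seig}.

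The step I expect to need the most care is justifying that the objects named in the lemma really exist: a scalar $\zeta$ obeying \eqref{eqn:zeta} and a matrix $\mathcal{P}>0$ solving \eqref{eqn:riccati}. A feasible $\zeta$ exists precisely because the Mahler-measure hypothesis \eqref{eqn:unstable} leaves room strictly between $\prod_j|\lambda_j^u(S)|$ and $(1+\mu_2/\mu_m)/(1-\mu_2/\mu_m)$; and for such a $\zeta$, the modified algebraic Riccati inequality \eqref{eqn:riccati} admits a solution $\mathcal{P}>0$ by the standard solvability criterion for Riccati inequalities of this ``erasure'' type, namely that a positive-definite solution exists if and only if $\zeta^{-1}>\prod_j|\lambda_j^u(S)|$, which is the left inequality of \eqref{eqn:zeta}. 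That criterion needs $(S,\1_n)$ to be stabilizable, which holds here because $(S,\1_n)=(\Lambda+\1_n\beta^T,\1_n)$ is controllable: controllability is invariant under state feedback, and $(\Lambda,\1_n)$ is controllable by Lemma~\ref{prop:controllable}. Granting these existence facts, the remaining computations --- the expansion of $M_j^T\mathcal{P}M_j$ and the endpoint analysis of $g_j(2-g_j)$ --- are routine.
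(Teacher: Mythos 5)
Your proposal is correct and follows essentially the same route as the paper's proof: reduce to $\rho(S-\mu_j\1_n\Gamma)<1$ via the block upper-triangular structure, then verify the Lyapunov inequality $M_j^T\mathcal{P}M_j\prec\mathcal{P}$ using \eqref{eqn:riccati} and the fact that $1-(1-g_j)^2\geq 1-\zeta^2$ for all $\mu_j\in[\mu_2,\mu_m]$ (your $g_j$ is $1-\zeta_j$ in the paper's notation). Your added remarks on the existence of $\zeta$ and of $\mathcal{P}>0$, via controllability of $(S,\1_n)$ and the solvability criterion for the modified Riccati inequality, only make explicit what the paper states in passing.
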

\begin{proof}
Consider any $j\in\{2,\...,n\}$. It follows that
\begin{equation}\label{eqn:Hi}
	\begin{split}
	H-\mu_j BT =\begin{bmatrix}
		A-KCA &\tilde K_1\beta^T&\cdots &\tilde K_r \beta^T\\	
		&S-\mu_j\1_n\Gamma&&\\
		&&\ddots&\\
		&&&S-\mu_j\1_n\Gamma
	\end{bmatrix}.
	\end{split}
\end{equation}
Recall that $(\Lambda, \1_n)$ is controllable. In view of \eqref{eqn:S}, it is not difficult to verify that $(S, \1_n)$ is also controllable.  Hence, by the choice of $\zeta$, there exists $\mathcal{P}>0$ that solves \eqref{eqn:riccati}. Together with \eqref{eqn:Gamma}, it holds that
\begin{equation}
	\begin{split}
		&\quad(S-\mu_j \1_n \Gamma)^T \mathcal{P} (S-\mu_j \1_n \Gamma)-\mathcal{P}\\
		&= S^T\mathcal{P}S-(1-\zeta_j^2)\frac{S^{T} \mathcal{P} \1_n \1_n^{T} \mathcal{P} S}{\1_n^{T} \mathcal{P} \1_n}-\mathcal{P}\\
		&\leq  S^T\mathcal{P}S-(1-\zeta^2)\frac{S^{T} \mathcal{P} \1_n\1_n^{T} \mathcal{P} S}{\1_n^{T} \mathcal{P} \1_n}-\mathcal{P}<0,
	\end{split}
\end{equation}
where $\zeta_j = 1-2\mu_j/(\mu_2+\mu_m)$, and the first inequality holds by \eqref{eqn:zeta}. Therefore, the Lyapunov inequality holds with $\mathcal{P}$, and one concludes that $\rho(S-\mu_j \1_n\Gamma)<1$. Notice that $A-KCA$ is stable. Our proof is thus completed. 
\end{proof}

The estimation performance of the proposed framework is expressed with respect to the average of local states of all sensors given by
\begin{align}
	\bar{\eta}(k) \triangleq \frac{1}{m}\sum_{i=1}^m \eta_i(k).\label{eqn:bareta}
\end{align}
The synchronization among local states is formally stated as follows:
\begin{theorem}\label{thm:synchronization}
Suppose that the condition \eqref{eqn:unstable} holds, and $\Gamma$ is designed based on \eqref{eqn:Gamma} and \eqref{eqn:riccati}. By applying the synchronization algorithm \eqref{eqn:update} with an event-based communication strategy that guarantees \eqref{eqn:hbar}, synchronization among local states is reached in the mean square sense. That is, the following statements hold at any time $k$:
	\begin{enumerate}
		\item \textit{Consistency condition:}
		\begin{equation}\label{eqn:consistency2}
			\bar{\eta}(k+1) =H\bar{\eta}(k)+ \bar L_z(k),
		\end{equation}
	where $\bar L_z(k) \triangleq\frac{1}{m}\sum_{i=1}^m L_iz_i(k)$.
		\item \textit{Consensus condition:} There exists $\Xi>0$ such that 
		\begin{equation}\label{eqn:consensus2}
			\cov[\eta_i(k)-\bar{\eta}(k)]\leq \Xi, \;\forall k.
		\end{equation}
	\end{enumerate} 
\end{theorem}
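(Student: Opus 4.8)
Condition \eqref{eqn:consistency2} is a direct consequence of averaging. Summing the update \eqref{eqn:update} over $i=1,\dots,m$ and using the definition \eqref{eqn:bareta}, the terms $H\eta_i(k)$ and $L_iz_i(k)$ assemble into $H\bar{\eta}(k)$ and $\bar L_z(k)$, while the coupling produces $\frac{1}{m}B\sum_{i=1}^m\sum_{j=1}^m a_{ij}\bigl(\widehat\Delta_j(k)-\widehat\Delta_i(k)\bigr)$; since $\mathcal{G}$ is undirected we have $a_{ij}=a_{ji}$ and the summand is antisymmetric in $(i,j)$, so this double sum vanishes, which is exactly \eqref{eqn:consistency2}.

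\textbf{Error dynamics.} For \eqref{eqn:consensus2}, set $\delta_i(k)\triangleq\eta_i(k)-\bar{\eta}(k)$ and stack $\delta(k)\triangleq\col(\delta_1(k),\dots,\delta_m(k))$, noting that $\sum_i\delta_i(k)=0$, i.e.\ $\delta(k)$ is orthogonal to $\1_m\otimes I$. Substituting $\widehat\Delta_i(k)=T\widehat\eta_i(k)=T\bigl(\eta_i(k)+\epsilon_i(k)\bigr)$ with $\epsilon_i(k)$ from \eqref{eqn:epsilon} into \eqref{eqn:update}, subtracting \eqref{eqn:consistency2}, and collecting the neighbor sums into the Laplacian $\mathcal{L}_{\mathcal{G}}$, I would obtain
\begin{equation*}
\delta(k+1)=\bigl[(I_m\otimes H)-(\mathcal{L}_{\mathcal{G}}\otimes BT)\bigr]\delta(k)+\tilde z(k)-(\mathcal{L}_{\mathcal{G}}\otimes BT)\,\epsilon(k),
\end{equation*}
with $\epsilon(k)\triangleq\col(\epsilon_1(k),\dots,\epsilon_m(k))$ and $\tilde z(k)$ the stacking of the centered inputs $L_iz_i(k)-\bar L_z(k)$. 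Writing $\mathcal{L}_{\mathcal{G}}=U\operatorname{diag}(\mu_1,\dots,\mu_m)U^T$ with $U$ orthogonal and $\mu_j$ as in \eqref{eqn:mu}, and changing coordinates to $\varphi(k)\triangleq(U^T\otimes I)\delta(k)$, the block associated with $\mu_1=0$ equals $\frac{1}{\sqrt m}\sum_i\delta_i(k)=0$, while for each $j\in\{2,\dots,m\}$
\begin{equation*}
\varphi_j(k+1)=(H-\mu_jBT)\,\varphi_j(k)+w_j(k)-\mu_jBT\,e_j(k),
\end{equation*}
where $w_j(k)$ and $e_j(k)$ are the $j$-th blocks of $(U^T\otimes I)\tilde z(k)$ and $(U^T\otimes I)\epsilon(k)$. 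By Lemma~\ref{lmm:eta}, $\rho(H-\mu_jBT)<1$ for each such $j$, so it suffices to bound $\cov[\varphi_j(k)]$ uniformly in $k$; transforming back then gives \eqref{eqn:consensus2} for a suitable $\Xi>0$.

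\textbf{Covariance bound.} The two inputs are treated differently. By Lemma~\ref{lmm:localeqv} each $z_i(k)$ has uniformly bounded covariance, so $w_j(k)$ is zero-mean with $\cov[w_j(k)]$ bounded; whereas the triggering guarantee \eqref{eqn:hbar} together with orthogonality of $U$ gives $\|e_j(k)\|^2\leq\|\epsilon(k)\|^2=\sum_{i}\|\epsilon_i(k)\|^2\leq m\hbar$, so $-\mu_jBT\,e_j(k)$ is bounded along every sample path, although it is neither zero-mean nor independent of $\varphi_j(k)$, being generated by the event rule. I would then fix $\Pi_j>0$ solving $(H-\mu_jBT)^T\Pi_j(H-\mu_jBT)-\Pi_j=-I$ (possible because $H-\mu_jBT$ is Schur), let $V_j(k)\triangleq\mathbb{E}\bigl[\varphi_j(k)^T\Pi_j\varphi_j(k)\bigr]$, expand $V_j(k+1)$ along the recursion, and absorb the two cross terms via Young's inequality with a parameter small enough that a strict contraction of $V_j$ survives; the noise and the perturbation then enter only as a constant, through the bound on $\cov[w_j]$ and through $m\hbar$. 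This yields $V_j(k+1)\leq\rho_j V_j(k)+c_j$ with $\rho_j\in[0,1)$ and $c_j<\infty$, which matches the hypothesis of the $c$-martingale convergence lemma of Appendix~\ref{sec:app_synchronization}; iterating it gives $\sup_k V_j(k)<\infty$, and reassembling the blocks produces the claimed $\Xi$.

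\textbf{Main difficulty.} The one non-routine point is that the coupling carries the event-triggered error $\epsilon(k)$, so the error recursion is not a linear system driven by i.i.d.\ noise and classical stochastic-stability theory cannot be quoted off the shelf. The resolution is structural: condition \eqref{eqn:hbar} furnishes a deterministic, uniform bound on $\epsilon(k)$, which lets one handle it pathwise and keep only a constant in the one-step Lyapunov estimate, after which the $c$-martingale lemma closes the argument. A secondary point needing care is the uniform boundedness of $\cov[\tilde z(k)]$, which is precisely where the stability of $z_i(k)$ from Lemma~\ref{lmm:localeqv} — hence the deliberate choice of the local-filter input in \eqref{eqn:xi_z}, cf.\ Remark~\ref{rmk:z} — enters.
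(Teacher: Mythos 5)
Your proposal is correct and follows essentially the same route as the paper's proof in Appendix~\ref{sec:app_synchronization}: summing the update for consistency, substituting $\widehat\Delta_i=T(\eta_i+\epsilon_i)$ and diagonalizing $\mathcal{L}_{\mathcal{G}}$ to isolate the blocks $H-\mu_jBT$ (Schur by Lemma~\ref{lmm:eta}), then a Lyapunov function with Young's inequality absorbing the cross terms, the pathwise bound \eqref{eqn:hbar} on $\epsilon$ and the bounded covariance of $z_i$ from Lemma~\ref{lmm:localeqv} supplying the constant, and the $c$-martingale lemma closing the argument. The only cosmetic difference is that you use one Lyapunov matrix per Laplacian eigenvalue block, whereas the paper uses a single aggregated $\mathcal{P}$ for the whole stacked error; this changes nothing substantive.
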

\begin{proof}
The proof is provided in Appendix~\ref{sec:app_synchronization}.
\end{proof}

The consistency condition \eqref{eqn:consistency2} claims that the dynamics of the average state $\bar{\eta}(k)$ is governed by $z(k)$ only. Therefore, the interaction among sensors only affects the evolution of local states but not $\bar{\eta}(k)$. On the other hand, \eqref{eqn:consensus2} states that, despite the signal $z(k)$, each local state can track $\bar{\eta}(k)$ with bounded error covariance. These conditions would help to establish the stability of local estimators.

\subsection{Stability analysis of local estimators}\label{sec:stable}
In Theorem~\ref{thm:synchronization}, we have proven that the synchronization algorithm \eqref{eqn:update} guarantees that the local states achieve both the consistency and consensus conditions. We shall, in this subsection, show how these conditions will help to achieve a stable local estimate at each sensor side.

First, the next theorem shows that the average of local estimates from all sensors is indeed the optimal Kalman estimate  \eqref{eqnn:optimalest}, as guaranteed by the consistency condition \eqref{eqn:consistency2}:

\begin{theorem}\label{thm:optimal}
Suppose that the condition \eqref{eqn:unstable} holds, and $\Gamma$ is designed based on \eqref{eqn:Gamma} and \eqref{eqn:riccati}. By performing Algorithm~\ref{alg:dist_est}, it holds at any $k\geq 0$ that
	\begin{equation}
	\frac{1}{m}\sum_{i=1}^m \breve{x}_i(k)=\hat{x}(k).
	\end{equation}
Namely,	the average of local estimates from all sensors coincides with the Kalman estimate.
\end{theorem}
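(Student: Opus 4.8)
The plan is to unfold the definition of $\breve{x}_i(k)$ from \eqref{eqn:localfuse}, sum over all sensors, invoke the consistency condition of Theorem~\ref{thm:synchronization}, and then identify the resulting recursion as exactly the matrix form \eqref{eqn:theta} of the decomposed Kalman filter, whose first $n$ entries are the optimal estimate $\hat x(k)$. Concretely, since $\breve{x}_i(k)=m\,\eta_{0,i}(k)$, we have $\frac{1}{m}\sum_{i=1}^m\breve{x}_i(k)=\sum_{i=1}^m\eta_{0,i}(k)=m\,\bar\eta_0(k)$, where $\bar\eta_0(k)$ denotes the first $n$-block of $\bar\eta(k)$ defined in \eqref{eqn:bareta}. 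So it suffices to show $m\,\bar\eta_0(k)=\hat x(k)$ for all $k$, which I would do by induction on $k$.

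First I would check the base case: $\eta_i(0)=0$ for all $i$ by definition, hence $\bar\eta(0)=0$, and $\hat x(0)=0$ as well since the Kalman recursion \eqref{eqnn:optimalest} is initialized at zero mean (the local filters $\hat\xi_i(0)=0$ give $\hat x(0)=\sum_i F_i\hat\xi_i(0)=0$ via Lemma~\ref{lmm:localeqv}). For the inductive step, apply the consistency condition \eqref{eqn:consistency2}: $\bar\eta(k+1)=H\bar\eta(k)+\bar L_z(k)$ with $\bar L_z(k)=\frac1m\sum_i L_i z_i(k)=\frac1m Lz(k)$. Reading off the first $n$ rows of $H$ from \eqref{eqn:S} — namely $[\,A-KCA\ \ \tilde K_1\beta^T\ \cdots\ \tilde K_r\beta^T\,]$ — and of $L$ — namely $K$ — the first block of this recursion is
\begin{equation*}
\bar\eta_0(k+1)=(A-KCA)\bar\eta_0(k)+\sum_{i=1}^r\tilde K_i\beta^T\bar\eta_i(k)+\tfrac1m Kz(k),
\end{equation*}
where $\bar\eta_i(k)$ is the $i$-th lower block of $\bar\eta(k)$. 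Multiplying by $m$ and comparing with \eqref{eqn:vartheta}--\eqref{eqn:theta}, I would identify $m\bar\eta(k)$ with $\vartheta(k)$: the lower blocks $m\bar\eta_i(k)$ satisfy $m\bar\eta_i(k+1)=S\,m\bar\eta_i(k)+\sum_{j=1}^m v_{ij}\1_n z_j(k)$ (again from \eqref{eqn:consistency2} restricted to the lower blocks, using that the $B$-term in \eqref{eqn:update} telescopes to zero under averaging), matching \eqref{eqn:vij}; and the first block then matches \eqref{eqn:kf_matrix} exactly. Since \eqref{eqn:theta} has $\hat x(k)$ as its first $n$ entries by the discussion following \eqref{eqn:Li}, the induction closes: $m\bar\eta_0(k)=\hat x(k)$, hence $\frac1m\sum_i\breve x_i(k)=\hat x(k)$.

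The only subtlety — and the step I would write most carefully — is justifying that averaging \eqref{eqn:update} kills the interaction term: $\frac1m\sum_i B\sum_j a_{ij}(\widehat\Delta_j(k)-\widehat\Delta_i(k))=\frac{B}{m}\sum_{i,j}a_{ij}(\widehat\Delta_j(k)-\widehat\Delta_i(k))=0$, which holds because $a_{ij}=a_{ji}$ makes the double sum antisymmetric. This is exactly the content already packaged into the consistency condition \eqref{eqn:consistency2} of Theorem~\ref{thm:synchronization}, so in fact the proof is essentially immediate once that theorem is cited; the remaining work is purely the bookkeeping of matching block structures between $\bar\eta(k)$ and $\vartheta(k)$, which presents no real obstacle. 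I would therefore keep the proof short, leaning on Theorem~\ref{thm:synchronization} and Lemma~\ref{lmm:localeqv}, and spend the few lines available on the block-matching identification rather than re-deriving the averaging cancellation.
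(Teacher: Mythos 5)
Your proposal is correct and follows essentially the same route as the paper: both invoke the consistency condition of Theorem~\ref{thm:synchronization} (whose proof is exactly the antisymmetric cancellation of the coupling term you flag) and then match the recursion of $\sum_i\eta_i(k)$, with zero initial condition, against the decomposed Kalman recursion \eqref{eqn:theta}/\eqref{eqn:kf_matrix} block by block. The only cosmetic difference is that you identify $m\bar\eta(k)$ with $\vartheta(k)$ wholesale by induction, whereas the paper first equates the lower blocks with $\sum_i v_{ji}\hat\xi_i(k)$ and then substitutes into the first-block recursion; the content is identical.
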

\begin{proof}
As a result of consistency condition \eqref{eqn:consistency2}, it follows for any $j=1,\cdots,r$ that
\begin{equation}\label{eqn:etasum}
	\sum_{i=1}^m \eta_{j,i}(k+1) =S\sum_{i=1}^m\eta_{j,i}(k) +\sum_{i=1}^mv_{ji}\1_n z_i(k). 
\end{equation}
Comparing it with \eqref{eqn:xi_z}, we can obtain the following statement for any time $k$ and any $j\in\mathcal{V}$:
	\begin{equation}
	\sum_{i=1}^mv_{ji}\hat \xi_i(k) = \sum_{i=1}^m \eta_{j,i}(k).
	\end{equation}
	Therefore, the following relation holds at any $k\geq 0$:
	\begin{equation}
		\begin{split}
			&\sum_{i=1}^m \eta_{0,i}(k+1)= (A-KCA)\sum_{i=1}^m \eta_{0,i}(k)\\&\quad\qquad\qquad\qquad+\sum_{i=1}^{m}\sum_{j=1}^r\tilde{K}_j\beta^T\eta_{j,i}(k)+\sum_{i=1}^{m}K_iz_i(k)
		\\&= (A-KCA)\sum_{i=1}^m \eta_{0,i}(k)+\sum_{j=1}^r\tilde{K}_j\beta^T\sum_{i=1}^{m}\eta_{j,i}(k)\\&\quad+\sum_{i=1}^{m}K_iz_i(k)
			\\&= (A-KCA)\sum_{i=1}^m \eta_{0,i}(k)+\sum_{j=1}^r\tilde{K}_j\beta^T\sum_{i=1}^mv_{ji}\hat \xi_i(k)\\&\quad+\sum_{i=1}^{m}K_iz_i(k).
		\end{split}
	\end{equation}
Comparing it with \eqref{eqn:kf_matrix} and \eqref{eqn:titleK}, one concludes that 
\begin{equation}\label{eqn:xvseta}
\hat{x}(k)=\sum_{i=1}^m \eta_{0,i}(k)=\frac{1}{m}\sum_{i=1}^m \breve{x}_i(k).
\end{equation}
This completes the proof.
\end{proof}

On the other hand, we shall also analyze the stability of estimation error, i.e., the boundedness of its covariance. This is particularly established by the consensus condition \eqref{eqn:consensus2}, as stated in the following theorem:

\begin{theorem}\label{thm:stable}
Suppose that the condition \eqref{eqn:unstable} holds, and $\Gamma$ is designed based on \eqref{eqn:Gamma} and \eqref{eqn:riccati}. By performing Algorithm~\ref{alg:dist_est}, it holds at any $k\geq 0$ that
\begin{equation}
\cov( \breve{x}_i(k)-x(k))<\infty, \;\forall i.
\end{equation}
Namely, the error covariance of each local estimate is bounded.
\end{theorem}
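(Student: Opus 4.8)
The plan is to decompose the local estimation error $\breve{x}_i(k)-x(k)$ into a consensus-error term, controlled by Theorem~\ref{thm:synchronization}, plus the centralized Kalman error, controlled by Assumption~\ref{assup:observable}.

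First I would read off from the proof of Theorem~\ref{thm:optimal} the sharper, block-wise identity $\sum_{i=1}^m\eta_{0,i}(k)=\hat{x}(k)$ (this is exactly \eqref{eqn:xvseta}), which says that the top $n$-dimensional block $\bar\eta_0(k)$ of $\bar\eta(k)$ defined in \eqref{eqn:bareta} satisfies $m\bar\eta_0(k)=\hat{x}(k)$. Combining this with $\breve{x}_i(k)=m\eta_{0,i}(k)$ from \eqref{eqn:localfuse} gives
\begin{equation*}
\breve{x}_i(k)-x(k)=m\bigl(\eta_{0,i}(k)-\bar\eta_0(k)\bigr)+\bigl(\hat{x}(k)-x(k)\bigr).
\end{equation*}

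Next I would bound the covariance of each summand. Since $\eta_{0,i}(k)-\bar\eta_0(k)$ is the top $n$-dimensional block of $\eta_i(k)-\bar\eta(k)$, the consensus condition \eqref{eqn:consensus2} shows that $\cov(\eta_{0,i}(k)-\bar\eta_0(k))$ is bounded, uniformly in $k$, by the corresponding principal submatrix of $\Xi$. For the second summand, $\hat{x}(k)-x(k)$ is the error of the steady-state Kalman filter \eqref{eqnn:optimalest}; under Assumption~\ref{assup:observable} the associated error covariance converges to $P$ as in \eqref{eqn:KFcov}, hence is bounded for all $k$. Applying the elementary bound $\cov(a+b)\leq 2\cov(a)+2\cov(b)$, which follows from $\cov(a-b)\geq 0$, then yields $\cov(\breve{x}_i(k)-x(k))<\infty$ for every $i$ and $k$, as claimed.

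I do not expect a genuine obstacle in this argument: all the difficulty has been absorbed into Theorem~\ref{thm:synchronization}, whose proof (in the appendix) must contend with the coupling between the event-triggered transmissions and the stochastic states via the $c$-martingale convergence lemma. The only points that need a little care are (i) making sure one uses the block identity $m\bar\eta_0(k)=\hat{x}(k)$ rather than merely the averaged statement $\frac{1}{m}\sum_i\breve{x}_i(k)=\hat{x}(k)$ of Theorem~\ref{thm:optimal}, and (ii) justifying that the steady-state Kalman filter, possibly started from a nonstationary covariance, still has uniformly bounded error covariance --- which is precisely the exponential convergence $P(k)\to P$ afforded by collective observability.
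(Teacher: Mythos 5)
Your proposal is correct and follows essentially the same route as the paper: decompose $\breve{x}_i(k)-x(k)$ into $m(\eta_{0,i}(k)-\bar\eta_0(k))$ plus the Kalman error via the block identity \eqref{eqn:xvseta}, then bound the first term by the consensus condition \eqref{eqn:consensus2} and the second by the convergence $P(k)\to P$. The only cosmetic difference is that the paper invokes the orthogonality principle to obtain the exact identity $\cov(\breve{e}_i(k))=m^2\cov(\eta_{0,i}(k)-\bar\eta_0(k))+P$ (used later to discuss the performance gap), whereas you use the cruder but equally valid bound $\cov(a+b)\leq 2\cov(a)+2\cov(b)$.
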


\begin{proof}
Let us consider the local estimator of any sensor $i$. By virtue of \eqref{eqn:consensus2}, we conclude that $\cov(\eta_{0,i}(k)-\bar{\eta}_{0}(k))$ is bounded at any time $k$, 
where $\bar{\eta}_{0}(k)=\frac{1}{m}\sum_{i=1}^m \eta_{0,i}(k)$. Then in order to prove the boundedness of $\cov( \breve{x}_i(k)-x(k))$, let us denote
\begin{equation}
	\bar{e}_i(k) \triangleq \breve{x}_i(k)-\hat{x}(k),
\end{equation}
which is the distance between local estimate $\breve{x}_i(k)$ and the optimal Kalman filter. Combining it with \eqref{eqn:xvseta} yields
\begin{equation}\label{eqn:bar_e}
	\bar{e}_i(k) = m(\eta_{0,i}(k)-\bar{\eta}_{0}(k)).
\end{equation}
Therefore, the estimation error of sensor $i$ is calculated as
\begin{equation}\label{eqn:breve_e}
	\begin{split}
		\breve{e}_i(k) &= \breve{x}_i(k)-x(k)\\&=(\breve{x}_i(k)-\hat{x}(k))+(\hat{x}(k)-x(k)) \\&= \bar{e}_i(k)+\hat{e}(k),
	\end{split}
\end{equation}
where $\hat{e}(k)$ is the estimation error of Kalman filter. According to orthogonality principle \cite{kay1993fundamentals}, $\bar{e}_i(k)$ is orthogonal to $\hat{e}(k)$. Therefore, it follows that
\begin{equation}\label{eqn:error}
\begin{aligned}
\cov(\breve{e}_i(k)) &= \cov(\bar{e}_i(k))+\cov(\hat{e}(k))\\&=m^2\cov(\eta_{0,i}(k)-\bar{\eta}_{0}(k))+P,
\end{aligned}
\end{equation}
where $P$, defined in \eqref{eqn:KFcov}, is the steady-state error convariance of Kalman filter. Since $\cov(\eta_{0,i}(k)-\bar{\eta}_{0}(k))$ is bounded, we therefore complete the proof.
\end{proof}

In view of Theorems \ref{thm:optimal} and \ref{thm:stable}, we conclude that each sensor yields a stable local estimate. This further indicates that, the problem of distributed state estimation can be resolved by using the algorithms designed for realizing the synchronization among stochastic linear systems.
 Moreover, by virtue of \eqref{eqn:error}, the performance gap between the proposed estimator and the Kalman filter is purely introduced by the consensus error $\cov[\eta_{0,i}(k)-\bar{\eta}_{0}(k)]$. 
Recalling the proof of Theorem~\ref{thm:synchronization}, we conclude that this error is introduced by the stochastic signals $\{z_i(t)\}_{t\leq k}$ and the event-triggering function. Hence, one can tune the estimation accuracy by choosing different triggering mechanisms, where the trade-off with triggering frequency should also be taken into consideration.

\begin{remark}
In Theorem~\ref{thm:stable}, we have proven the stability of estimation error. However, due to the communication triggered by noisy states, it is difficult to calculate the exact bound of estimation error. This aspect is similar to the studies in the existing event-based distributed estimation algorithms \cite{liu2018event,battistelli2018distributed,yu2020event,wu2016finite,song2019event,shi2016event,liu2015event,yan2014distributed}. 
\end{remark}

\subsection{Application to synchronization of stochastic linear systems}
By performing the decomposition of Kalman filter, we have shown that the algorithms designed for stochastic linear systems synchronization can be used in our framework to resolve the problem of distributed estimation. Here, we highlight that the results in this paper can be readily applied for achieving the synchronization of a class of stochastic linear systems with event-triggered communication schemes. 

To be concrete, let us consider a group of $m$ agents. The dynamics of each agent $i$ is given by a stochastic linear system as below:
\begin{equation}\label{eqn:synsys}
\nu_i(k+1) = \widetilde A \nu_i(k)+\widetilde Bu_i(k)+\widetilde L_i \omega_i(k),
\end{equation} 
where $\nu_i(k)$ and $u_i(k)$ are respectively the state and control input of the $i$-th agent, and $\omega_i(k)$ is the system noise with zero mean and bounded covariance. Then under a mild assumption that $(\widetilde A ,\widetilde B )$ is controllable, one can design the event-based controller as
\begin{equation}
u_i(k) =\widetilde\Gamma \sum_{j=1}^m a_{ij}(\widehat{\nu}_{j}(k)-\widehat{\nu}_{i}(k)),
\end{equation}
where $\widetilde\Gamma$ is determined by 
\begin{equation}
\widetilde\Gamma=\frac{2}{\mu_2+\mu_m}\frac{\widetilde B^T\mathcal{P}\widetilde{A}}{\widetilde B^T\mathcal{P}\widetilde B}.
\end{equation}
By doing so, the synchronization among agents is reached in the mean square sense with bounded error covariance. Namely, the consistency and consensus conditions \eqref{eqn:consistency2}--\eqref{eqn:consensus2} are reached. 


In these years, synchronization of stochastic linear systems with event-triggered schemes has received particular research attention. For example, Ma \textit{et al.} \cite{ma2016event} have focused on systems where the dynamics of each agent is subject to mutually uncorrelated zero-mean Gaussian white noises. Using linear matrix inequalities, they provide an event-based controller which facilitates the synchronization among agents in the mean square sense. Considering state-dependent noises, the authors of \cite{ding2015event} have leveraged the theory of input-to-state stability in probability and have derived sufficient conditions under which synchronization in probability is reached by using an
event-triggered control protocol. 

Different from the existing works, \eqref{eqn:synsys} deals with a more general class of noises. Hence, it includes the independent Gaussian white noise model in \cite{ma2016event} and the state-dependent noise model in \cite{ding2015event}. Specifically, the noises are only assumed to be bounded in covariance while they might be correlated with the states of agents along time and among agents. Because of its generality, our approach can be useful for various applications in both theoretical and engineering fields. 

\section{Low Message Complexity Estimator Design}\label{sec:Kdesign}
Noting that in practice, a communication channel in the sensor network is usually limited by a finite bandwidth, we finally investigate the design of distributed estimators under the constraint of message complexity. Specifically, suppose that the message complexity that the network is willing to tolerate is $\tilde{r}>0$. In this section, we show how to design the distributed estimator such that each sensor only sends messages of size no greater than $\tilde{r}$ when triggered.

To begin with, notice that any centralized Luenberger observer for estimating system \eqref{eqn:plant} is given by
\begin{equation}\label{eqn:Luenberger}
\hat{x}(k+1) =(A-K_{\tilde{r}}CA)\hat{x}(k)+K_{\tilde{r}}y(k+1),
\end{equation} 
where $K_{\tilde r}$ is the estimation gain of the Luenberger observer. Following similar arguments as in Sections~\ref{sec:decompose} and \ref{sec:implement}, it is not difficult to verify that Algorithm~\ref{alg:dist_est} can be generalized, by replacing $K$ with $K_{\tilde{r}}$, to achieve a distributed implementation of \eqref{eqn:Luenberger}. On the other hand, as stated in Remark~\ref{rmk:rankK}, the message complexity of implementing this distributed estimator is $\rank(K_{\tilde r})$. Hence, one way to reduce the message complexity is using a Luenberger observer with a low-rank estimation gain matrix, namely, $\rank(K_{\tilde r})\leq \tilde r$. Then by implementing it with Algorithm~\ref{alg:dist_est}, at each triggering instant, sensors only transmit a vector of dimension no more than $\tilde r$, which meets the network requirement. 

In what follows, we show how to design the optimal estimation gain $K_{\tilde r}$ in the sense that the Luenberger observer yields the minimum performance loss, under the constraint that $\rank(K_{\tilde r})\leq \tilde{r}$. To this end, let us factorize the estimation gain as 
	\begin{equation}\label{eqn:Kfactorize}
	K_{\tilde r}=\bar{K} W,
	\end{equation}
where $\bar{K}\in\mathbb{R}^{n\times \tilde{r}}$ and $W\in\mathbb{R}^{\tilde{r}\times m}$. We first consider how to design the optimal $\bar K$ when $W$ is given. After that, a semi-definite programming (SDP) is presented to compute the optimal $W$ under the constraint that $\rank(W)=\tilde{r}$. By virtue of \eqref{eqn:Kfactorize}, we conclude $\rank(K_{\tilde r})\leq \tilde{r}$.

\subsection{Optimal $\bar K$ when $W$ is given}
First, suppose that $W$ is given. We consider the following measurements given by a ``virtual" sensor network:
\begin{equation}\label{eqn:virtual_system}
\begin{split}
	\tilde{y}(k) = \tilde{C}x(k)+\tilde{v}(k),
\end{split}
\end{equation}
where 
\begin{equation}\label{eqn:virtualC}
\tilde{y}(k)=Wy(k),\; \tilde{C}=WC,\; \tilde{v}(k)=Wv(k).
\end{equation}
Suppose that this ``virtual" sensor network is monitoring the system \eqref{eqn:plant} and a Luenberger observer is performed with estimation gain $\bar K$, where $\bar K$ is defined in \eqref{eqn:Kfactorize}. Let us respectively denote by $\tilde{x}(k)$ and $\tilde{P}$ the corresponding estimate and error covariance. That is, 
\begin{equation}
\tilde{x}(k+1) = (A-\bar{K}\tilde{C}A)\tilde{x}(k)+\bar{K}\tilde{y}(k+1),
\end{equation}
and 
\begin{equation}
\tilde{P}(k) = \cov(\tilde{x}(k)-x(k)),\; \tilde{P}=\lim _{k \rightarrow \infty} \tilde{P}(k).
\end{equation}
The following result is immediate:
\begin{lemma}\label{lmm:tildeP}
Let $P_{\tilde r}$ be the estimation error covariance of the Luenberger observer~\eqref{eqn:Luenberger}.
Then it follows that
	\begin{equation}
		P_{\tilde r}=\tilde{P}.
	\end{equation}
\end{lemma}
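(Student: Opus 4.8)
The plan is to recognize that Lemma~\ref{lmm:tildeP} is essentially an algebraic identity: running a Luenberger observer with gain $\bar K$ on the compressed measurements $\tilde y(k) = Wy(k)$ produces \emph{literally the same} estimator as running \eqref{eqn:Luenberger} with gain $K_{\tilde r} = \bar K W$ on the original measurements, so their error covariances must coincide.

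First I would substitute the factorization \eqref{eqn:Kfactorize} together with the definitions in \eqref{eqn:virtualC} into the virtual observer. Using $\bar K \tilde C = \bar K W C = K_{\tilde r} C$ and $\bar K \tilde y(k+1) = \bar K W y(k+1) = K_{\tilde r} y(k+1)$, the virtual recursion becomes $\tilde x(k+1) = (A - K_{\tilde r} C A)\tilde x(k) + K_{\tilde r} y(k+1)$, which is verbatim the recursion defining the estimate $\hat x(k)$ of \eqref{eqn:Luenberger}, driven by the same measurement sequence $\{y(k)\}$. One should also note, for consistency of the noise bookkeeping, that the virtual measurement noise $\tilde v(k) = Wv(k)$ is precisely the noise term appearing in $\tilde y(k) = \tilde C x(k) + \tilde v(k)$, which is immediate from $Wy(k) = WCx(k) + Wv(k)$.

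Second, since both observers are initialized identically (at the zero-mean prior) and are driven by the same input, a straightforward induction on $k$ gives $\tilde x(k) = \hat x(k)$ for all $k$. Hence the estimation errors coincide sample-path-wise, $\tilde x(k) - x(k) = \hat x(k) - x(k)$, so $\tilde P(k) = \cov(\tilde x(k) - x(k)) = \cov(\hat x(k) - x(k)) = P_{\tilde r}(k)$ for every $k$; letting $k \to \infty$ yields $\tilde P = P_{\tilde r}$. I do not expect any genuine obstacle here: the lemma is pure bookkeeping, and the only ``step'' is making the substitution and invoking uniqueness of the solution of a linear recursion with given initial condition and input.
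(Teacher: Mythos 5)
Your argument is correct and is precisely why the paper labels this result ``immediate'': substituting $\tilde C = WC$, $\tilde y = Wy$, and $\bar K W = K_{\tilde r}$ shows the virtual observer's recursion is identical to \eqref{eqn:Luenberger}, so with the same initialization the estimates, and hence the error covariances and their limits, coincide. This matches the paper's (omitted) reasoning exactly.
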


As a result of Lemma~\ref{lmm:tildeP}, we would focus on finding the optimal $\bar{K}$ which minimizes $\tr(\tilde P)$. Clearly, for any given $W$, the optimal solution is provided by the Kalman filter, and the steady-state error covariance can be calculated as
\begin{equation}\label{eqn:optimalP}
	\tilde{P} = [(A\tilde{P}A^T+Q)^{-1}+\tilde{C}^T(\tilde{R})^{-1}\tilde C]^{-1},
\end{equation} 
where
\begin{equation}\label{eqn:tildeR}
\tilde{R}=WRW^T.
\end{equation}
Moreover, the optimal gain is given by
\begin{align}
	\bar{K} = (A\tilde{P}A^T+Q)\tilde C^T[\tilde{C}(A\tilde{P}A^T+Q)\tilde C^T+\tilde{R}]^{-1}. \label{eqn:optimalK}
\end{align}

\subsection{Towards finding an optimal $W$}
As seen from \eqref{eqn:optimalP}, the error covariance $\tilde P$ is a function of $W$.
Therefore, we next aim to find the optimal $W$ in the sense that $\tr(\tilde P)$ is minimized under the constraint that $\rank(W)=\tilde{r}$. Notice that $W$ only appears in the term $\tilde{C}^T(\tilde{R})^{-1}\tilde C$ of \eqref{eqn:optimalP}. We thus rewrite it as
\begin{equation}
\begin{aligned}
	&\tilde{C}^T(\tilde{R})^{-1}\tilde C = C^TW^T(WRW^T)^{-1}WC\\
	&= C^TR^{-1/2}[R^{1/2} W^T(WRW^T)^{-1}WR^{1/2}]R^{-1/2}C.
\end{aligned}
\end{equation} 
Let us denote
\begin{equation}\label{eqn:X}
	X\triangleq R^{1/2} W^T(WRW^T)^{-1}WR^{1/2}\in\mathbb{R}^{m\times m},\; \bar{C}\triangleq R^{-1/2}C.
\end{equation}
It is easy to verify that $X$ is a symmetric projection matrix, namely $X^2=X$ and $X=X^T$. Moreover, $\rank(X)=\rank(W)=\tilde{r}.$
On the other hand, given any symmetric projection matrix $X$ which is of rank $\tilde{r}$, one can always find $W$ that satisfies \eqref{eqn:X} by
\begin{equation}\label{eqn:W}
W= \Big(R^{-1/2}\begin{bmatrix}
v_1&\cdots& v_m
\end{bmatrix}\Big)^T,
\end{equation}
where $\{v_1, \cdots, v_m\}$ is the orthonormal basis of the column space of $X$. Therefore, instead of minimizing $\tr(\tilde P)$ over $W$, we can minimize it over $X$. Since the constraint on rank is not convex, we follow the approach in \cite{yuan2015security} and compute $X$ based on a convex relaxation by solving the following SDP:
 \begin{equation}\label{eqn:optimization2}
\begin{aligned}
	&\mathop{\textrm{minimize}}\limits_{X,\,\tilde P,\,\Theta}&
	& \tr(\tilde P) \\
	&\textrm{subject to}&
	&\begin{bmatrix}
		\tilde P&I\\
		I&\Theta
	\end{bmatrix}\geq 0, \\
	&&
	&\begin{bmatrix}
		Q^{-1}-\Theta+\bar{C}^TX\bar{C}& Q^{-1}A\\
		A^TQ^{-1}& \Theta+A^TQ^{-1}A
	\end{bmatrix} \geq 0 ,\\
	&&
	&X^T = X,\,0\leq X \leq I_m,\,\tr(X) = \tilde r. 
\end{aligned}
 \end{equation}

\begin{remark}
Notice that if $\tilde{r}>0$ and $(A,C)$ is observable, \eqref{eqn:optimization2} is always solvable, since one can verify that $X=\frac{\tilde{r}}{m} I_m$ is a feasible solution of it.
\end{remark}

For the problem \eqref{eqn:optimization2}, we can 
obtain the optimal solution $X_{*}$ and $\tilde{P}_{*}$. However, since the constraint on the rank of $X$ has been relaxed, the matrix $X_{*}$ is not necessarily a projection with rank $\tilde{r}$. In this case, one can obtain an approximation based on $X_*$. Specifically, we apply an eigendecomposion to $X_*$ as
$$
X_{*}=U_{*} \operatorname{diag}\left(\lambda_{1}, \ldots, \lambda_{p}\right) U_{*}^{T},
$$
where $U_{*}$ is orthonormal and $\lambda_{1} \geq \cdots \geq \lambda_{p}$ are the eigenvalues of $X_*$. We thus can obtain a projection matrix $X_{0}$ as
$$
X_{0}=U_{*} \operatorname{diag}(\underbrace{1, \ldots, 1}_{\tilde{r}}, \underbrace{0, \ldots, 0}_{m-\tilde{r}}) U_{*}^{T} .
$$
It is easy to verify $\tr(X_0)=\tilde{r}$ and thus the (sub)optimal $W$ is obtained by \eqref{eqn:W}. Then combining \eqref{eqn:Kfactorize}, \eqref{eqn:optimalP}, and \eqref{eqn:optimalK}, we finally obtain a (sub)optimal estimation gain 
$K_{\tilde r}$, the rank of which is no more than $\tilde{r}$. As discussed previously, by performing Algorithm~\ref{alg:dist_est}, one can, in a distributed manner, implement the Luenberger observer with gain $K_{\tilde r}$, where the message complexity is at most $\tilde{r}$. 

\begin{remark}
	As observed from \eqref{eqn:S}, the dimensions of $H$ and $L$ increase linearly with $\tilde{r}$. Therefore, using a low-rank estimator also helps to reduce the computation burden of sensors when performing the distributed estimation algorithm and making updates via \eqref{eqn:update}. This is especially beneficial if sensors are powered by energy-limited batteries.
\end{remark}


\section{Numerical Examples}\label{sec:simulation}
In this section, we would verify the established results through some numerical examples.

\subsection{Example $1$}
The first example aims to estimate the state of a $2$-dimensional LTI system with parameters
\begin{equation}
\begin{split}
A = 
\begin{bmatrix}
	0.9 & 0\\
	0 & 1.1
\end{bmatrix},\; Q=0.5I_2,
\end{split}
\end{equation}
To the end, $m=4$ sensors are deployed in the network. The measurements of them are given by 
\begin{equation}
		C = 
		\begin{bmatrix}
			1 & 0 & 1 & 1\\
			0 & 1 & 1 & -1
		\end{bmatrix}^T,
	\;R=2I_4.
\end{equation}
Suppose that the sensors are connected as a ring with $a_{ij}=1$ at each edge $(i,j)$. 
Then, it can be checked that $\mu_2=2$ and $\mu_4=4$. Moreover, we set the initial state $x(0)\sim \mathcal{N}(0,I)$, and choose $\zeta=0.5$ which meets the sufficient condition in Lemma~\ref{lmm:eta}.


In this example, \eqref{eqn:trigger1} is selected as the triggering function, where the parameters are set as $c_0=c_1=5$, $\alpha = 0.8$.
By performing Algorithm~\ref{alg:dist_est}, it is observed from Fig.~\ref{fig:x} that the mean square estimation error from each sensor is stable during the operation.  Moreover, Fig.~\ref{fig:performance} and Fig.~\ref{fig:performance2} show the box and whisker diagram obtained by the $1000$-run Monte Carlo trials, where the bottom and top of the box represent the first and third quartiles, the (red) band inside the box represents the median of the data, and the ends of the whiskers represent the minimum and maximum of the data. As illustrated, by choosing $c_0=c_1=5$, $\alpha = 0.8$, the average communication rate over the whole network is
$72.5\%$. On the other hand, compared to the distributed estimator in full transmission case \cite{yan2021distributed}, the estimation error incurred by the event-triggering mechanism is respectively $22.4\%, 8.8\%, 15.3\%,9.6\% $ for the four agents.  
Therefore, the proposed distributed estimator reduces the data transmission while preserving the estimation performance. Moreover, by increasing $c_0$ and $c_1$ to $8$, the communication among agents becomes less frequent, while leading to larger estimation error.

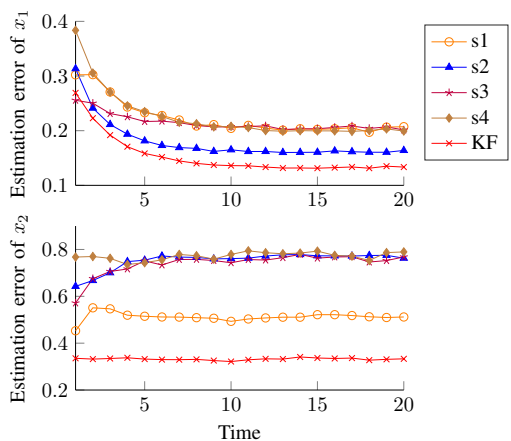
\begin{figure}[htbp]
	\centering
	\resizebox{0.37\textwidth}{!}{
%
%
\definecolor{mycolor1}{rgb}{0.00000,0.44700,0.74100}%
\definecolor{mycolor2}{rgb}{0.85000,0.32500,0.09800}%
\definecolor{mycolor3}{rgb}{0.92900,0.69400,0.12500}%
\definecolor{mycolor4}{rgb}{0.49400,0.18400,0.55600}%
\definecolor{mycolor5}{rgb}{0.46600,0.67400,0.18800}%
\begin{tikzpicture}

\begin{axis}[%
width=2.2in,
height=1.1in,
at={(0.758in,3.554in)},
scale only axis,
xmin=1,
xmax=20,
ymin=0.1,
ymax=0.4,
ylabel={Estimation error of $x_1$},
axis background/.style={fill=white},
axis x line*=bottom,
axis y line*=left
]
\addplot[color={orange}, mark={o}]
table[row sep=crcr]{%
1	0.302121180752325\\
2	0.302531925382281\\
3	0.270790892929535\\
4	0.243276799241696\\
5	0.232888346243705\\
6	0.227708282124915\\
7	0.219562708018138\\
8	0.208534774177071\\
9	0.211053634882657\\
10	0.204244486403274\\
11	0.209712530724859\\
12	0.204767961390524\\
13	0.20101659775227\\
14	0.203540896279591\\
15	0.202400557591565\\
16	0.203549485974262\\
17	0.205386769064392\\
18	0.19706440684058\\
19	0.20651303875293\\
20	0.207283075552277\\
};
\addplot[color={blue}, mark={triangle*}]
table[row sep=crcr]{%
1	0.313395924557476\\
2	0.241235028228442\\
3	0.211318017339914\\
4	0.193582756586918\\
5	0.181398105722853\\
6	0.172939605034201\\
7	0.169074571838835\\
8	0.167658191774771\\
9	0.162053809352254\\
10	0.165007345486911\\
11	0.161879212916129\\
12	0.161951350345411\\
13	0.160099139842983\\
14	0.160295758177786\\
15	0.160554931811933\\
16	0.163059177764199\\
17	0.161532948376557\\
18	0.16056133123955\\
19	0.160526809138623\\
20	0.163817507865157\\
};
\addplot[color={purple}, mark={star}]
table[row sep=crcr]{%
1	0.255628916441544\\
2	0.250275356313448\\
3	0.231336405929384\\
4	0.225384695644437\\
5	0.216804303447242\\
6	0.217448469635608\\
7	0.214714453980911\\
8	0.209494214895698\\
9	0.206950355341257\\
10	0.207044763187027\\
11	0.207671198596366\\
12	0.208831784760228\\
13	0.202192460601073\\
14	0.203233278598226\\
15	0.203549073895409\\
16	0.206022226570817\\
17	0.208622191746696\\
18	0.204092133426298\\
19	0.206947620037054\\
20	0.20132985249283\\
};
\addplot[color={brown}, mark={diamond*}]
table[row sep=crcr]{%
1	0.383593560200717\\
2	0.305409921380758\\
3	0.270690727454749\\
4	0.245413374750699\\
5	0.234761145971102\\
6	0.225984523706989\\
7	0.21538058064133\\
8	0.212691680438308\\
9	0.20697878705572\\
10	0.207942236185892\\
11	0.205738125212475\\
12	0.200243252998125\\
13	0.199030670595392\\
14	0.20004087672674\\
15	0.199644087660341\\
16	0.199642238130119\\
17	0.198791611562973\\
18	0.20033477136046\\
19	0.203273667574249\\
20	0.199362563701373\\
};
\addplot[color={red}, mark={x}]
table[row sep=crcr]{%
1	0.269151387318479\\
2	0.222787204893859\\
3	0.191867442683382\\
4	0.17071687357105\\
5	0.158362032512998\\
6	0.151740557392187\\
7	0.144694752495978\\
8	0.140171075377349\\
9	0.137224529367621\\
10	0.136089865134629\\
11	0.135768715821108\\
12	0.133440549024216\\
13	0.131664391376833\\
14	0.131750764538463\\
15	0.131352076748205\\
16	0.132451174818388\\
17	0.133625149467611\\
18	0.131341925245552\\
19	0.135041594288518\\
20	0.133428591892059\\
};
\end{axis}

\begin{axis}[%
	width=2.2in,
	height=1.1in,
	at={(0.758in,2.181in)},
	scale only axis,
	xmin=1,
	xmax=20,
	xlabel={Time},
	ymin=0.2,
	ymax=0.9,
	ylabel={Estimation error of $x_2$},
	axis background/.style={fill=white},
	axis x line*=bottom,
	axis y line*=left,
	legend style={at={(1.063,1.40)}, anchor=south west, legend cell align=left, align=left, draw=white!15!black}
	]
\addplot[color={orange}, mark={o}]
table[row sep=crcr]{%
1	0.453029400067562\\
2	0.550894916164929\\
3	0.546583406464047\\
4	0.519424517870761\\
5	0.514885188224169\\
6	0.51180799840207\\
7	0.511305994808284\\
8	0.508836527397316\\
9	0.506342619585707\\
10	0.493212400205813\\
11	0.503271154377489\\
12	0.507700269914462\\
13	0.51092684860295\\
14	0.510913991263884\\
15	0.521600196367656\\
16	0.521611587238072\\
17	0.517898687406905\\
18	0.512692699662478\\
19	0.509555786187111\\
20	0.511801907551975\\
};
\addlegendentry {s1}
\addplot[color={blue}, mark={triangle*}]
table[row sep=crcr]{%
1	0.64256224120478\\
2	0.667407709595111\\
3	0.700505363251635\\
4	0.748112559935401\\
5	0.75429870078848\\
6	0.772128134221469\\
7	0.76752966610019\\
8	0.766036084665164\\
9	0.759784539467824\\
10	0.76093452019309\\
11	0.763007790740956\\
12	0.771904091392439\\
13	0.777281953590168\\
14	0.778221284374065\\
15	0.7730066715847\\
16	0.772518442758351\\
17	0.771820759540536\\
18	0.774208314814909\\
19	0.774978089237367\\
20	0.762765920394484\\
};
\addlegendentry {s2}
\addplot[color={purple}, mark={star}]
table[row sep=crcr]{%
1	0.571474875195974\\
2	0.675312231042555\\
3	0.705984121766989\\
4	0.716775804251697\\
5	0.751121363859044\\
6	0.73382704125233\\
7	0.757146966662187\\
8	0.75768102691019\\
9	0.752797640999932\\
10	0.743258047374474\\
11	0.756937001846725\\
12	0.754813862167644\\
13	0.765013420825248\\
14	0.778874823587747\\
15	0.762973448937797\\
16	0.767879598031928\\
17	0.769088959247574\\
18	0.746368272005167\\
19	0.752541820631393\\
20	0.768087777045378\\
};
\addlegendentry {s3}
\addplot[color={brown}, mark={diamond*}]
table[row sep=crcr]{%
1	0.767604637138717\\
2	0.769929305891828\\
3	0.762157060919805\\
4	0.735767835867393\\
5	0.743512110918758\\
6	0.755833278625771\\
7	0.778664236968829\\
8	0.773203067561396\\
9	0.758610722451041\\
10	0.778752617450897\\
11	0.794174110672067\\
12	0.786078575600135\\
13	0.781627873057487\\
14	0.784584047550623\\
15	0.79282145228034\\
16	0.774800854703678\\
17	0.771372078508285\\
18	0.753077713435502\\
19	0.786568315416121\\
20	0.789581219951017\\
};
\addlegendentry { s4}
\addplot[color={red}, mark={x}]
table[row sep=crcr]{%
1	0.335392622620581\\
2	0.332043111066769\\
3	0.334597545474621\\
4	0.337678400104026\\
5	0.332057650567736\\
6	0.329857514633496\\
7	0.329582667601162\\
8	0.330800294478031\\
9	0.325731003810904\\
10	0.321566062378603\\
11	0.329067603414451\\
12	0.333154316076968\\
13	0.332134809135495\\
14	0.341081941775255\\
15	0.336722952619112\\
16	0.334685507834207\\
17	0.336416876931158\\
18	0.327404518189118\\
19	0.330965011755791\\
20	0.333209344369545\\
};
\addlegendentry {KF}
\end{axis}
\end{tikzpicture}
	\caption{Average mean square estimation error of system states in $1000$-run Monte Carlo trials.}
	\label{fig:x}
\end{figure}

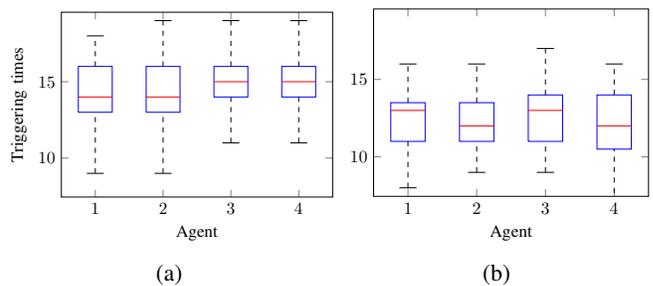
\begin{figure}[htbp]
	\centering
	\begin{minipage}[b]{0.24\textwidth}
		\centering
		\resizebox{\textwidth}{!}{
%
%
\begin{tikzpicture}

\begin{axis}[%
width=2.2in,
height=1.5in,
at={(1.011in,0.642in)},
scale only axis,
unbounded coords=jump,
xmin=0.5,
xmax=4.5,
xtick={1,2,3,4},
xlabel = { Agent},
ymin=7.45,
ymax=19.55,
ylabel = {Triggering times},
axis background/.style={fill=white}
]
\addplot [color=black, dashed, forget plot]
  table[row sep=crcr]{%
1	16\\
1	18\\
};
\addplot [color=black, dashed, forget plot]
  table[row sep=crcr]{%
2	16\\
2	19\\
};
\addplot [color=black, dashed, forget plot]
  table[row sep=crcr]{%
3	16\\
3	19\\
};
\addplot [color=black, dashed, forget plot]
  table[row sep=crcr]{%
4	16\\
4	19\\
};
\addplot [color=black, dashed, forget plot]
  table[row sep=crcr]{%
1	9\\
1	13\\
};
\addplot [color=black, dashed, forget plot]
  table[row sep=crcr]{%
2	9\\
2	13\\
};
\addplot [color=black, dashed, forget plot]
  table[row sep=crcr]{%
3	11\\
3	14\\
};
\addplot [color=black, dashed, forget plot]
  table[row sep=crcr]{%
4	11\\
4	14\\
};
\addplot [color=black, forget plot]
  table[row sep=crcr]{%
0.875	18\\
1.125	18\\
};
\addplot [color=black, forget plot]
  table[row sep=crcr]{%
1.875	19\\
2.125	19\\
};
\addplot [color=black, forget plot]
  table[row sep=crcr]{%
2.875	19\\
3.125	19\\
};
\addplot [color=black, forget plot]
  table[row sep=crcr]{%
3.875	19\\
4.125	19\\
};
\addplot [color=black, forget plot]
  table[row sep=crcr]{%
0.875	9\\
1.125	9\\
};
\addplot [color=black, forget plot]
  table[row sep=crcr]{%
1.875	9\\
2.125	9\\
};
\addplot [color=black, forget plot]
  table[row sep=crcr]{%
2.875	11\\
3.125	11\\
};
\addplot [color=black, forget plot]
  table[row sep=crcr]{%
3.875	11\\
4.125	11\\
};
\addplot [color=blue, forget plot]
  table[row sep=crcr]{%
0.75	13\\
0.75	16\\
1.25	16\\
1.25	13\\
0.75	13\\
};
\addplot [color=blue, forget plot]
  table[row sep=crcr]{%
1.75	13\\
1.75	16\\
2.25	16\\
2.25	13\\
1.75	13\\
};
\addplot [color=blue, forget plot]
  table[row sep=crcr]{%
2.75	14\\
2.75	16\\
3.25	16\\
3.25	14\\
2.75	14\\
};
\addplot [color=blue, forget plot]
  table[row sep=crcr]{%
3.75	14\\
3.75	16\\
4.25	16\\
4.25	14\\
3.75	14\\
};
\addplot [color=red, forget plot]
  table[row sep=crcr]{%
0.75	14\\
1.25	14\\
};
\addplot [color=red, forget plot]
  table[row sep=crcr]{%
1.75	14\\
2.25	14\\
};
\addplot [color=red, forget plot]
  table[row sep=crcr]{%
2.75	15\\
3.25	15\\
};
\addplot [color=red, forget plot]
  table[row sep=crcr]{%
3.75	15\\
4.25	15\\
};
\end{axis}

\end{tikzpicture}
		\subcaption{}
	\end{minipage}%
	\begin{minipage}[b]{0.24\textwidth}
	\centering
	\resizebox{0.93\textwidth}{!}{
%
%
\begin{tikzpicture}

\begin{axis}[%
width=2.2in,
height=1.5in,
at={(1.011in,0.642in)},
scale only axis,
unbounded coords=jump,
xmin=0.5,
xmax=4.5,
xtick={1,2,3,4},
xlabel = { Agent},
ymin=7.45,
ymax=19.55,
ylabel = {},
axis background/.style={fill=white}
]
\addplot [color=black, dashed, forget plot]
  table[row sep=crcr]{%
1	13.5\\
1	16\\
};
\addplot [color=black, dashed, forget plot]
  table[row sep=crcr]{%
2	13.5\\
2	16\\
};
\addplot [color=black, dashed, forget plot]
  table[row sep=crcr]{%
3	14\\
3	17\\
};
\addplot [color=black, dashed, forget plot]
  table[row sep=crcr]{%
4	14\\
4	16\\
};
\addplot [color=black, dashed, forget plot]
  table[row sep=crcr]{%
1	8\\
1	11\\
};
\addplot [color=black, dashed, forget plot]
  table[row sep=crcr]{%
2	9\\
2	11\\
};
\addplot [color=black, dashed, forget plot]
  table[row sep=crcr]{%
3	9\\
3	11\\
};
\addplot [color=black, dashed, forget plot]
  table[row sep=crcr]{%
4	7\\
4	10.5\\
};
\addplot [color=black, forget plot]
  table[row sep=crcr]{%
0.875	16\\
1.125	16\\
};
\addplot [color=black, forget plot]
  table[row sep=crcr]{%
1.875	16\\
2.125	16\\
};
\addplot [color=black, forget plot]
  table[row sep=crcr]{%
2.875	17\\
3.125	17\\
};
\addplot [color=black, forget plot]
  table[row sep=crcr]{%
3.875	16\\
4.125	16\\
};
\addplot [color=black, forget plot]
  table[row sep=crcr]{%
0.875	8\\
1.125	8\\
};
\addplot [color=black, forget plot]
  table[row sep=crcr]{%
1.875	9\\
2.125	9\\
};
\addplot [color=black, forget plot]
  table[row sep=crcr]{%
2.875	9\\
3.125	9\\
};
\addplot [color=black, forget plot]
  table[row sep=crcr]{%
3.875	7\\
4.125	7\\
};
\addplot [color=blue, forget plot]
  table[row sep=crcr]{%
0.75	11\\
0.75	13.5\\
1.25	13.5\\
1.25	11\\
0.75	11\\
};
\addplot [color=blue, forget plot]
  table[row sep=crcr]{%
1.75	11\\
1.75	13.5\\
2.25	13.5\\
2.25	11\\
1.75	11\\
};
\addplot [color=blue, forget plot]
  table[row sep=crcr]{%
2.75	11\\
2.75	14\\
3.25	14\\
3.25	11\\
2.75	11\\
};
\addplot [color=blue, forget plot]
  table[row sep=crcr]{%
3.75	10.5\\
3.75	14\\
4.25	14\\
4.25	10.5\\
3.75	10.5\\
};
\addplot [color=red, forget plot]
  table[row sep=crcr]{%
0.75	13\\
1.25	13\\
};
\addplot [color=red, forget plot]
  table[row sep=crcr]{%
1.75	12\\
2.25	12\\
};
\addplot [color=red, forget plot]
  table[row sep=crcr]{%
2.75	13\\
3.25	13\\
};
\addplot [color=red, forget plot]
  table[row sep=crcr]{%
3.75	12\\
4.25	12\\
};
\end{axis}
\end{tikzpicture}
	\subcaption{}
	\label{fig:times2}
\end{minipage}
	\caption{Triggering times of each agent in $20$ iterations by performing Algorithm~\ref{alg:dist_est}, where the parameters of \eqref{eqn:trigger1} are (a) $c_0=c_1=5$, $\alpha = 0.8$; (b) $c_0=c_1=8$, $\alpha = 0.8$.}
\label{fig:performance}
\end{figure}
\begin{figure}[h!]
	\begin{minipage}[b]{0.24\textwidth}
		\centering
		\resizebox{\textwidth}{!}{
%
%
\begin{tikzpicture}

\begin{axis}[%
width=2.2in,
height=1.5in,
at={(2.6in,0.811in)},
scale only axis,
xmin=0.5,
xmax=4.5,
xtick={1,2,3,4},
xlabel = {Agent},
ymin=0.8,
ymax=2.5,
ylabel = {Relative error},
axis background/.style={fill=white}
]
\addplot [color=black, dashed, forget plot]
  table[row sep=crcr]{%
1	1.26184660663355\\
1	1.3384458823965\\
};
\addplot [color=black, dashed, forget plot]
  table[row sep=crcr]{%
2	1.0991783823695\\
2	1.12075417553215\\
};
\addplot [color=black, dashed, forget plot]
  table[row sep=crcr]{%
3	1.1638939857908\\
3	1.21655761236679\\
};
\addplot [color=black, dashed, forget plot]
  table[row sep=crcr]{%
4	1.1108622728191\\
4	1.14487752044923\\
};
\addplot [color=black, dashed, forget plot]
  table[row sep=crcr]{%
1	1.13276062863677\\
1	1.20937840343485\\
};
\addplot [color=black, dashed, forget plot]
  table[row sep=crcr]{%
2	1.02319120318998\\
2	1.06849474167272\\
};
\addplot [color=black, dashed, forget plot]
  table[row sep=crcr]{%
3	1.07430775422801\\
3	1.12802694994855\\
};
\addplot [color=black, dashed, forget plot]
  table[row sep=crcr]{%
4	1.07269971029293\\
4	1.08776365071064\\
};
\addplot [color=black, forget plot]
  table[row sep=crcr]{%
0.875	1.3384458823965\\
1.125	1.3384458823965\\
};
\addplot [color=black, forget plot]
  table[row sep=crcr]{%
1.875	1.12075417553215\\
2.125	1.12075417553215\\
};
\addplot [color=black, forget plot]
  table[row sep=crcr]{%
2.875	1.21655761236679\\
3.125	1.21655761236679\\
};
\addplot [color=black, forget plot]
  table[row sep=crcr]{%
3.875	1.14487752044923\\
4.125	1.14487752044923\\
};
\addplot [color=black, forget plot]
  table[row sep=crcr]{%
0.875	1.13276062863677\\
1.125	1.13276062863677\\
};
\addplot [color=black, forget plot]
  table[row sep=crcr]{%
1.875	1.02319120318998\\
2.125	1.02319120318998\\
};
\addplot [color=black, forget plot]
  table[row sep=crcr]{%
2.875	1.07430775422801\\
3.125	1.07430775422801\\
};
\addplot [color=black, forget plot]
  table[row sep=crcr]{%
3.875	1.07269971029293\\
4.125	1.07269971029293\\
};
\addplot [color=blue, forget plot]
  table[row sep=crcr]{%
0.75	1.20937840343485\\
0.75	1.26184660663355\\
1.25	1.26184660663355\\
1.25	1.20937840343485\\
0.75	1.20937840343485\\
};
\addplot [color=blue, forget plot]
  table[row sep=crcr]{%
1.75	1.06849474167272\\
1.75	1.0991783823695\\
2.25	1.0991783823695\\
2.25	1.06849474167272\\
1.75	1.06849474167272\\
};
\addplot [color=blue, forget plot]
  table[row sep=crcr]{%
2.75	1.12802694994855\\
2.75	1.1638939857908\\
3.25	1.1638939857908\\
3.25	1.12802694994855\\
2.75	1.12802694994855\\
};
\addplot [color=blue, forget plot]
  table[row sep=crcr]{%
3.75	1.08776365071064\\
3.75	1.1108622728191\\
4.25	1.1108622728191\\
4.25	1.08776365071064\\
3.75	1.08776365071064\\
};
\addplot [color=red, forget plot]
  table[row sep=crcr]{%
0.75	1.22408639948477\\
1.25	1.22408639948477\\
};
\addplot [color=red, forget plot]
  table[row sep=crcr]{%
1.75	1.08807001658301\\
2.25	1.08807001658301\\
};
\addplot [color=red, forget plot]
  table[row sep=crcr]{%
2.75	1.1537660488257\\
3.25	1.1537660488257\\
};
\addplot [color=red, forget plot]
  table[row sep=crcr]{%
3.75	1.09672901594904\\
4.25	1.09672901594904\\
};
\end{axis}
\end{tikzpicture}
		\subcaption{}
	\end{minipage}%
	\begin{minipage}[b]{0.24\textwidth}
	\centering
	\resizebox{0.93\textwidth}{!}{
%
%
\begin{tikzpicture}

\begin{axis}[%
width=2.2in,
height=1.5in,
at={(0.758in,0.481in)},
scale only axis,
xmin=0.5,
xmax=4.5,
xlabel = {Agent},
xtick={1,2,3,4},
ymin=0.8,
ymax=2.5,
axis background/.style={fill=white}
]
\addplot [color=black, dashed, forget plot]
  table[row sep=crcr]{%
1	2.16696927872065\\
1	2.39847102682907\\
};
\addplot [color=black, dashed, forget plot]
  table[row sep=crcr]{%
2	1.88442863399895\\
2	1.97454433907615\\
};
\addplot [color=black, dashed, forget plot]
  table[row sep=crcr]{%
3	1.81896076622833\\
3	1.88389512233802\\
};
\addplot [color=black, dashed, forget plot]
  table[row sep=crcr]{%
4	1.72565330233914\\
4	1.82471810877544\\
};
\addplot [color=black, dashed, forget plot]
  table[row sep=crcr]{%
1	1.88826027188363\\
1	1.94458757134078\\
};
\addplot [color=black, dashed, forget plot]
  table[row sep=crcr]{%
2	1.71622291110533\\
2	1.81708494294614\\
};
\addplot [color=black, dashed, forget plot]
  table[row sep=crcr]{%
3	1.70897201427988\\
3	1.77470189423576\\
};
\addplot [color=black, dashed, forget plot]
  table[row sep=crcr]{%
4	1.59106084479358\\
4	1.65904697774575\\
};
\addplot [color=black, forget plot]
  table[row sep=crcr]{%
0.875	2.39847102682907\\
1.125	2.39847102682907\\
};
\addplot [color=black, forget plot]
  table[row sep=crcr]{%
1.875	1.97454433907615\\
2.125	1.97454433907615\\
};
\addplot [color=black, forget plot]
  table[row sep=crcr]{%
2.875	1.88389512233802\\
3.125	1.88389512233802\\
};
\addplot [color=black, forget plot]
  table[row sep=crcr]{%
3.875	1.82471810877544\\
4.125	1.82471810877544\\
};
\addplot [color=black, forget plot]
  table[row sep=crcr]{%
0.875	1.88826027188363\\
1.125	1.88826027188363\\
};
\addplot [color=black, forget plot]
  table[row sep=crcr]{%
1.875	1.71622291110533\\
2.125	1.71622291110533\\
};
\addplot [color=black, forget plot]
  table[row sep=crcr]{%
2.875	1.70897201427988\\
3.125	1.70897201427988\\
};
\addplot [color=black, forget plot]
  table[row sep=crcr]{%
3.875	1.59106084479358\\
4.125	1.59106084479358\\
};
\addplot [color=blue, forget plot]
  table[row sep=crcr]{%
0.75	1.94458757134078\\
0.75	2.16696927872065\\
1.25	2.16696927872065\\
1.25	1.94458757134078\\
0.75	1.94458757134078\\
};
\addplot [color=blue, forget plot]
  table[row sep=crcr]{%
1.75	1.81708494294614\\
1.75	1.88442863399895\\
2.25	1.88442863399895\\
2.25	1.81708494294614\\
1.75	1.81708494294614\\
};
\addplot [color=blue, forget plot]
  table[row sep=crcr]{%
2.75	1.77470189423576\\
2.75	1.81896076622833\\
3.25	1.81896076622833\\
3.25	1.77470189423576\\
2.75	1.77470189423576\\
};
\addplot [color=blue, forget plot]
  table[row sep=crcr]{%
3.75	1.65904697774575\\
3.75	1.72565330233914\\
4.25	1.72565330233914\\
4.25	1.65904697774575\\
3.75	1.65904697774575\\
};
\addplot [color=red, forget plot]
  table[row sep=crcr]{%
0.75	2.02435903534918\\
1.25	2.02435903534918\\
};
\addplot [color=red, forget plot]
  table[row sep=crcr]{%
1.75	1.8509199514391\\
2.25	1.8509199514391\\
};
\addplot [color=red, forget plot]
  table[row sep=crcr]{%
2.75	1.79635925667301\\
3.25	1.79635925667301\\
};
\addplot [color=red, forget plot]
  table[row sep=crcr]{%
3.75	1.68625159802322\\
4.25	1.68625159802322\\
};
\end{axis}
\end{tikzpicture}
	\subcaption{}
\end{minipage}%
\caption{Relative estimation error of each agent in comparison with the full transmission case by performing Algorithm~\ref{alg:dist_est}, where the parameters of \eqref{eqn:trigger1} are (a) $c_0=c_1=5$, $\alpha = 0.8$; (b) $c_0=c_1=8$, $\alpha = 0.8$.}
\label{fig:performance2}
\end{figure}
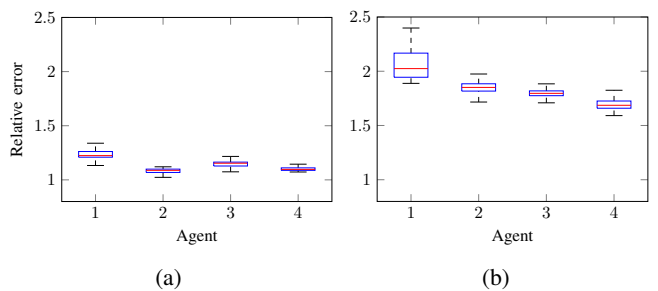


\subsection{Example $2$}
The second example considers a system of larger scale. Concretely, we study the heat transfer process in a square region as presented in \cite{yan2021distributed}. As shown in Fig.~\ref{fig:topology}, $m=15$ sensors are deployed for monitoring temperature within the region represented by a $5\times 5$ grid. The temperature of each grid is taken as a state and thus $n=25$. The detailed procedure for modeling the system is omitted here. However, interested readers can refer to \cite{yan2021distributed}. The covariance of system and measurement noises is chosen as $Q=I_{n}$ and $R=I_{m}$.
\begin{figure}[]
	\centering
	\includegraphics[width=0.23\textwidth]{}
	\caption{The location and topology of $m$ sensors in the grid.}
	\label{fig:topology}
\end{figure}

\subsubsection{Performance of low-rank estimators}
With the given system, we first illustrate the low-rank estimator as designed in Section~\ref{sec:Kdesign}. Fig.~\ref{fig:lowrank} shows the estimation error of the designed estimators with rank $\tilde r=1,5,7,15$, respectively. Note that $m=15$ is the original Kalman filter case. One can see that even if $\tilde r = 7$, meaning that we only use less than half of the degree of freedom 
to design the estimator, the performance loss is small as around $5\%$.
\begin{figure}[htbp]
	\centering
	\resizebox{0.42\textwidth}{!}{
%
%
\definecolor{mycolor1}{rgb}{0.00000,0.44700,0.74100}%
\definecolor{mycolor2}{rgb}{0.85000,0.32500,0.09800}%
\definecolor{mycolor3}{rgb}{0.92900,0.69400,0.12500}%
\definecolor{mycolor4}{rgb}{0.49400,0.18400,0.55600}%
\begin{tikzpicture}
	
	\begin{axis}[%
		width=2.2in,
		height=1.3in,
		at={(0in,0in)},
		scale only axis,
		xmin=1,
		xmax=30,
		xlabel={{Time}},
		ymin=20,
		ymax=50,
		ylabel={{ Estimation error}},
		axis background/.style={fill=white},
		axis x line*=bottom,
		axis y line*=left,
		legend style={legend cell align=left, align=left, draw=white!15!black},
		legend pos={outer north east}
		]
		\addplot [color=mycolor1, mark=o,mark size=1.5pt]
		table[row sep=crcr]{%
			1	24.7329749946281\\
			2	30.9535641489055\\
			3	34.2157692931679\\
			4	36.5547390365517\\
			5	38.0234873103829\\
			6	39.2170431160262\\
			7	40.3041990683507\\
			8	41.1810969876393\\
			9	42.0298804453696\\
			10	42.1993890044411\\
			11	42.5549174069529\\
			12	42.9879539347682\\
			13	43.3449334548898\\
			14	43.616577531512\\
			15	43.7212458590475\\
			16	43.8074324613189\\
			17	43.9971030445961\\
			18	44.4736162807069\\
			19	44.452857553664\\
			20	44.3008473571626\\
			21	44.3874158356297\\
			22	44.6647640530355\\
			23	44.764418466552\\
			24	44.8994120901792\\
			25	44.8994120901792\\
			26	44.8994120901792\\
			27	44.8994120901792\\
			28	44.8994120901792\\
			29	44.8994120901792\\
			30	44.8994120901792\\
		};
		\addlegendentry{{\footnotesize $\tilde r = 1$}}
		
		\addplot [color=mycolor2, mark=square,mark size=1.5pt]
		table[row sep=crcr]{%
			1	23.5005382895565\\
			2	28.0290733706933\\
			3	29.8940795277087\\
			4	30.6544495334603\\
			5	31.130334615928\\
			6	31.2833492930407\\
			7	31.3243312727608\\
			8	31.4035573792787\\
			9	31.7126826460809\\
			10	31.7103592520702\\
			11	31.7163107651764\\
			12	31.5458627554721\\
			13	31.6786936927278\\
			14	31.7101779913316\\
			15	31.4037625806342\\
			16	31.4822276255249\\
			17	31.6972788224781\\
			18	31.8233600907493\\
			19	31.7429906278426\\
			20	31.5981825494562\\
			21	31.5363825415017\\
			22	31.6693191208636\\
			23	31.5433451258478\\
			24	31.5958568401481\\
			25	31.84277280916\\
			26	31.610345943169\\
			27	31.655165806214\\
			28	31.6172456548625\\
			29	31.5468721224731\\
			30	31.527179642016\\
		};
		\addlegendentry{{\footnotesize $\tilde r = 5$}}
		
		\addplot [color=mycolor3, mark=x]
		table[row sep=crcr]{%
			1	22.6742272250259\\
			2	26.6608899448125\\
			3	27.9345214663098\\
			4	28.3677669559724\\
			5	28.8018359398739\\
			6	28.8390915411213\\
			7	29.0861361443086\\
			8	29.0391717248065\\
			9	29.0034513407184\\
			10	29.0983736225117\\
			11	29.111899327807\\
			12	29.1973251355601\\
			13	29.1555585426452\\
			14	29.121579977852\\
			15	29.1456579584492\\
			16	29.198608083605\\
			17	29.1800164394459\\
			18	29.121782023755\\
			19	29.1042548370255\\
			20	29.1815963395368\\
			21	29.138416412395\\
			22	29.1348297492142\\
			23	29.1641938969396\\
			24	29.159006179247\\
			25	29.1769226997966\\
			26	29.1690553518847\\
			27	28.9542096695196\\
			28	29.2394209165204\\
			29	29.1213578144527\\
			30	29.3227616894127\\
		};
		\addlegendentry{{\footnotesize $\tilde r =7$}}
		
		\addplot [color=mycolor4, mark=triangle]
		table[row sep=crcr]{%
			1	21.9187422969218\\
			2	25.4560238534798\\
			3	26.6809593841344\\
			4	27.1774975410364\\
			5	27.5530005648025\\
			6	27.6068638775199\\
			7	27.6464350000402\\
			8	27.6114641342622\\
			9	27.6015176022023\\
			10	27.8019977107359\\
			11	27.6277782044806\\
			12	27.8043428550983\\
			13	27.7971208938815\\
			14	27.770135326414\\
			15	27.7639251465458\\
			16	27.6533649228409\\
			17	27.6286998396839\\
			18	27.5230323871013\\
			19	27.6940985893231\\
			20	27.606567649928\\
			21	27.6181200494654\\
			22	27.8044927494749\\
			23	27.8048329294441\\
			24	27.7732428558875\\
			25	27.8904679202486\\
			26	27.9282483571417\\
			27	27.8129638395203\\
			28	27.8211947073575\\
			29	27.6845575753418\\
			30	27.7242241752132\\
		};
		\addlegendentry{{\footnotesize $\tilde r = 15$}}      
		
	\end{axis}
	
\end{tikzpicture}
	\caption{Average mean square estimation error of low-rank estimators in $1000$-run Monte Carlo trials.}
	\label{fig:lowrank}
\end{figure}
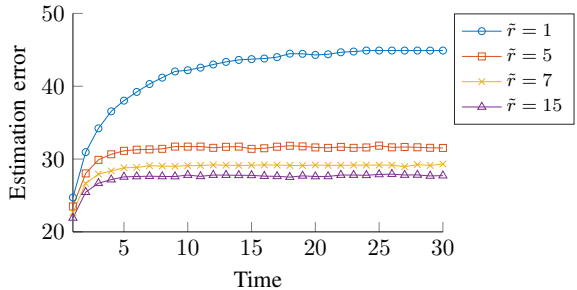

TABLE ~\ref{table:improvement} further compares the estimation performance of the low-rank estimator versus that of the Kalman filter with no rank constraint.  Given $\tilde r$ varying from $0$ to $15$, we define the relative performance as
\begin{equation}
J_{\tilde r} \triangleq \frac{\tr(P_{\tilde r})}{\tr(P)},
\end{equation}
where $P$ and $P_{\tilde r}$ are respectively the steady-state error covariances of the Kalman filter and the proposed estimator with rank $\tilde r$. 

\begin{table*}[]
	\small
	\caption{Relative Performance of Designed Estimator under Rank Constraint}
	\centering
	\label{table:improvement}
	\begin{tabular}{m{0.55cm}<{\centering} *{15}{m{0.55cm}<{\centering}}}
		\hline
		$\tilde r$
		& 1  &2 & 3 & 4 & 5 & 6 &7 & 8 & 9 & 10 & 11  & 12 & 13 & 14 & 15\\ \hline
	$J_{\tilde r}$ & 1.628    &1.442    &1.226   &1.141    &1.104    &1.071    &1.050    &1.033    &1.022    &1.014    &1.006    &1.002    &1.001 &1.001         &1.000\\\hline
	\end{tabular}
\end{table*}

\subsubsection{Performance of different algorithms}
The estimation performance of different algorithms is presented in Fig.~\ref{fig:heatmap}. By using an event-based communication strategy, Algorithm~\ref{alg:dist_est} inevitably incurs more estimation error than the centralized Kalman filter and the distributed estimator with full transmission, which always have the information of all sensors and all neighboring sensors, respectively. Moreover, we further implement the designed low-rank estimator, where $\tilde r =1$, in a distributed manner by replacing $K$ with $K_{\tilde r}$ in Algorithm~\ref{alg:dist_est}. As shown in Fig.~\ref{fig:heatmap}(d), it also yields a stable local estimate.
\begin{figure}[]
	\centering
	\includegraphics[width=0.47\textwidth]{}
	\caption{The squared estimation error of sensor $1$ by performing (a) centralized Kalman filter, (b) distributed implementation of centralized Kalman filter with full transmission (\hspace{1pt}\cite{yan2021distributed}), (c) distributed implementation of centralized Kalman filter with event-based transmission (Algorithm~\ref{alg:dist_est}), (d) distributed implementation of the designed rank-$1$ estimator with event-based transmission.}
	\label{fig:heatmap}
\end{figure}

\section{Conclusion}\label{sec:conclusion}
This paper has addressed the problem of distributed state estimation with event-based communication protocols. By decomposing the centralized estimator, we have reformulated the problem of distributed estimation to that of stochastic linear systems synchronization, in which a large class of triggering functions has been proved to be effective in yielding a stable local estimate at every sensor side. Given any $\tilde r$, an SDP is presented which gives the (sub)optimal gain of centralized estimator such that the distributed estimation algorithm can be implemented with message complexity no more than $\tilde{r}$, which is lower than that in the existing works. Finally, as we have discussed, the framework proposed in this paper can potentially be extended to achieve a distributed implementation of a stable Luenberger estimator under other noise models, e.g., $H_\infty$ estimator in the presence of bounded noise, which will be left as one of our future works.

\appendices

\section{Proof of Lemma~\ref{lmm:localeqv}}\label{sec:app_localeqv}

1) To begin with, it follows from \eqref{eqn:plant} and \eqref{eqn:diagA} that 
\begin{equation}\label{eqn:xs}
x^s(k+1) = A^s x^s(k)+Jw(k),
\end{equation}
where $J = \begin{bmatrix}
0 & \1_{n^s}
\end{bmatrix}\in\mathbb{R}^{n_s\times n}$ and $x(k) = \col(x^u(k), x^s(k)).$ Moreover, let us partition $C_i$ according to \eqref{eqn:diagA} as
\begin{equation}\label{eqn:diagC}
\begin{aligned}
C_i = \begin{bmatrix}
C_i^u & C_i^s
\end{bmatrix},
\end{aligned}
\end{equation} with $C_i^u\in\mathbb{R}^{1\times n^u}$ and $C_i^s\in\mathbb{R}^{1 \times n^s}$. 

It is not difficult to verify from \eqref{eqn:Sdef} and \eqref{eqn:xi_z} that $\hat{\xi}_i(k)$ can be rewritten as
\begin{equation}\label{eqn:xi}
\hat \xi_i(k+1)=\Lambda\hat\xi_i(k)+ \1_ny_i(k+1).
\end{equation}
Since $(S^T,\beta)$ is controllable, by Lemma~\ref{lmm:regular}, for any $i\in\mathcal{V}$, we can find $G_i^u\in\mathbb{R}^{n\times n^u}$ such that
\begin{equation*}
(G_i^u)^T S^T = \left(A^u\right)^T (G_i^u)^T,\, (G_i^u)^T\beta = (C_i^uA^u)^T,
\end{equation*}
which implies that
\begin{equation}
\begin{aligned}
G_i^uA^u - \mathbf 1_n C_i^uA^u &= SG_i^u - \1_n \beta^T G_i^u \\ &= (\Lambda + 1\beta^T)G_i^u - \1_n \beta^T G_i^u=\Lambda G_i^u  ,\\
\beta^T G_i^u &= C_i^uA^u.
\end{aligned}
\end{equation}
Therefore,  we conclude that
\begin{equation}\label{eqn:G_i}
\begin{aligned}
\begin{bmatrix}
G_i^u & 0
\end{bmatrix}A-\mathbf1_nC_{i}A&=	\begin{bmatrix}
G_i^uA^u & 0
\end{bmatrix}-\1_{n}\begin{bmatrix}
C_i^uA^u & C_i^sA^s
\end{bmatrix}
\\&=\Lambda\begin{bmatrix}
G_i^u & 0
\end{bmatrix}
- \mathbf 1_{n}\begin{bmatrix}
0 &C_i^sA^s
\end{bmatrix}
,\\
\beta^T\begin{bmatrix}
G_i^u & 0
\end{bmatrix}&= \begin{bmatrix}
C_i^uA^u &0	
\end{bmatrix} = C_{i}A -
\begin{bmatrix}
0 &
C_{i}^sA^s
\end{bmatrix}.
\end{aligned}
\end{equation}

For simplicity, let us denote 
\begin{equation}\label{eqn:defGi}
G_i \triangleq \begin{bmatrix}
G_i^u & 0
\end{bmatrix}\in\mathbb{R}^{n\times n}.
\end{equation}
Moreover, define 
\begin{equation}\label{eqn:def_e}
\epsilon_i(k)\triangleq G_ix(k)-\hat\xi_i(k).
\end{equation} 
By \eqref{eqn:xi}, we thus calculate the dynamics of $\epsilon_i(k)$ as
\begin{equation}\label{eqn:e_i}
\begin{split}
&\epsilon_i(k+1) = G_ix(k+1)-\hat\xi_i(k+1)\\
&=(G_i-\1_nC_i)Ax(k)-\Lambda\hat\xi_i(k)+(G_i-\1_nC_i)w(k)\\&\quad-\1_nv_i(k+1)\\
&=(\Lambda G
-\1_{n}\begin{bmatrix}
0 &
C_{i}^sA^s
\end{bmatrix})x(k)-\Lambda\hat\xi_i(k)+(G_i-\1_nC_i)w(k)\\&\quad-\1_nv_i(k+1)\\
&=\Lambda \epsilon_i(k)-\1_{n}C_{i}^sA^sx^s(k) +(G_i-\1_nC_i)w(k)\\&\quad-\1_nv_i(k+1),
\end{split}
\end{equation}
where the third equality holds by \eqref{eqn:G_i}. It is noted that $\Lambda$ is a stable matrix and $C_{i}^sA^sx^s(k)$ is also stable. Hence, one concludes that $\cov(\epsilon_i(k))$ is bounded.

On the other hand, let us consider the dynamics of $z_i(k)$:
\begin{equation}\label{eqn:z_i}
\begin{split}
z_i(k)&=y_i(k+1)-\beta^T(G_ix(k)-\epsilon_i(k))\\
&=C_i(Ax(k)+w(k))+v_i(k+1)+\beta^T\epsilon_i(k)\\&\qquad-(C_{i}A -
\begin{bmatrix}
0 &
C_{i}^sA^s
\end{bmatrix})x(k)\\
&=\beta^T\epsilon_i(k)+C_{i}^sA^sx^s(k)+C_iw(k)+v_i(k+1).
\end{split}
\end{equation}
As previously proved, $\cov(\epsilon_i(k))$ is bounded. It thus follows that $\cov(z_i(k))$ is also bounded. 

2) To prove \eqref{eqn:localdecompose}, let us multiple both sides of \eqref{eqn:xi} by $F_i$, which gives that
\begin{equation}
F_i	\hat \xi_i(k+1)= F_i \Lambda \hat \xi_i(k)+ F_i \1_n y_i(k+1).
\end{equation}
Since $F_i$ solves \eqref{eqn:F_i}, one obtains that
\begin{equation}\label{eqn:Fi}
F_i	\hat \xi_i(k+1)= (A-KCA)F_i \hat \xi_i(k)+ K_i y_i(k+1).
\end{equation}
Summing up \eqref{eqn:Fi} for all $i=1,\cdots,m,$ yields that
\begin{equation}
\sum_{i=1}^m F_i	\hat \xi_i(k+1)= (A-KCA)\sum_{i=1}^mF_i \hat \xi_i(k)+ \sum_{i=1}^m K_i y_i(k+1).
\end{equation}
By comparing it with \eqref{eqn:optimalKF}, we complete the proof.

\section{Proof of Theorem~\ref{thm:synchronization}}\label{sec:app_synchronization}
Before proving Theorem~\ref{thm:synchronization}, we first introduce some useful preliminaries. Notice that the presence of stochastic signals $z_i(k)$ prevents us from directly applying the approaches of Lyapunov stability for deterministic systems to the analysis. We would therefore resort to a stochastic analogue of it. 

To this end, let $\{\mathcal{F}(t)\}_{t\geq 0}$ be a filtration in a
probability space $(\Omega,\mathcal{F},\mathcal{P})$ and $\{V(k)\}$ is a sequence of non-negative functions. Let us define 
\begin{equation}\label{eqn:EV}
\begin{split}
\Delta V(k) &\triangleq V(k+1)-V(k),\\
	\mathbb{E}[\Delta V(k)|\mathcal{F}(k)]&\triangleq \mathbb{E}[V(k+1)|\mathcal{F}(k)]-V(k).
\end{split}
\end{equation}
Notice that the function $V$ is defined as a supermartingale if $\mathbb{E}[\Delta V(k)|\mathcal{F}(k)] \leq 0, \;\forall k$. Moreover, if there exists $\rho>0$, such that it holds at any time that $\mathbb{E}[\Delta V(k)|\mathcal{F}(k)] \leq -\rho V(k)$, then $\mathbb{E}[V(k+1)]$ converges with an exponential rate almost surely. In order to analyze the systems which have non-zero noises at the origin, we shall extend the classical results on stability of supermartingales. Specifically,  we will consider functions that are almost supermartingales, in the sense that 
	\begin{equation}\label{eqn:cmartingale}
\mathbb{E}[\Delta V(k)|\mathcal{F}(k)] \leq -\rho V(k)+c(k),
\end{equation}
for some $\mathcal{F}(k)$-measurable random variable $c(k)$. Such functions are termed as $c$-martingale in the literature \cite{pham2009contraction,steinhardt2012finite,wang2016stochastic}. Based on their definitions, we propose $c$-martingale convergence lemma:
\begin{lemma}[$c$-martingale convergence lemma]\label{lmm:cmartingale}
Suppose there exists $\rho>0$ such that \eqref{eqn:cmartingale} holds and $\mathbb{E}[c(k)]\leq c<\infty$. Then it follows for any $k\geq 0$ that $\mathbb{E}\left[V(k)\right]$ is bounded.
\end{lemma}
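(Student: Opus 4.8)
The plan is to collapse the stochastic inequality \eqref{eqn:cmartingale} into a deterministic scalar recursion for the sequence $\mathbb{E}[V(k)]$ and then solve that recursion. First I would rewrite \eqref{eqn:cmartingale}, using the definition \eqref{eqn:EV}, in the form
\begin{equation*}
\mathbb{E}[V(k+1)\mid \mathcal{F}(k)] \leq (1-\rho)\,V(k) + c(k).
\end{equation*}
Taking expectations on both sides, invoking the tower property $\mathbb{E}\big[\mathbb{E}[V(k+1)\mid\mathcal{F}(k)]\big]=\mathbb{E}[V(k+1)]$ and the hypothesis $\mathbb{E}[c(k)]\leq c$, I obtain
\begin{equation*}
\mathbb{E}[V(k+1)] \leq (1-\rho)\,\mathbb{E}[V(k)] + c, \qquad \forall k\geq 0.
\end{equation*}
A preliminary point to settle (by induction on $k$) is that every $\mathbb{E}[V(k)]$ is finite, so that the above manipulations are legitimate: $\mathbb{E}[V(0)]<\infty$ holds since the filters are initialized deterministically, and the displayed recursion propagates finiteness from step $k$ to step $k+1$.

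Second, I would simply solve this recursion. If $\rho\geq 1$, the coefficient $1-\rho$ is non-positive and, since $V\geq 0$ forces $\mathbb{E}[V(k)]\geq 0$, the recursion yields at once $\mathbb{E}[V(k+1)]\leq c$ for all $k$, hence boundedness. If $0<\rho<1$, unrolling the recursion gives
\begin{equation*}
\mathbb{E}[V(k)] \leq (1-\rho)^{k}\,\mathbb{E}[V(0)] + c\sum_{i=0}^{k-1}(1-\rho)^{i} \leq \mathbb{E}[V(0)] + \frac{c}{\rho},
\end{equation*}
where the last inequality bounds the partial geometric sum by $1/\rho$. In either case $\sup_{k\geq 0}\mathbb{E}[V(k)]<\infty$, which is the assertion.

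Third, I do not expect a genuine obstacle here: the statement is essentially a stochastic comparison/Gr\"onwall argument, and the only care required is the integrability bookkeeping at each step, handled by the induction mentioned above. The point worth emphasizing is that the lemma is deliberately weaker than the classical supermartingale convergence theorem --- it claims only boundedness of $\mathbb{E}[V(k)]$, not almost-sure convergence of $V(k)$ --- precisely because the additive term $c(k)$ need not vanish; this is exactly the form needed later to handle the persistent noise floor entering the synchronization-error dynamics in the proof of Theorem~\ref{thm:synchronization}.
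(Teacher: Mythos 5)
Your proof is correct and follows essentially the same route as the paper's: take conditional expectation form $\mathbb{E}[V(k+1)\mid\mathcal{F}(k)]\leq(1-\rho)V(k)+c(k)$, pass to unconditional expectations, and unroll the resulting scalar recursion with the geometric sum. Your added care (the integrability induction and the separate treatment of $\rho\geq 1$, where the paper's unrolled bound would implicitly need $1-\rho\geq 0$) is a minor tightening, not a different argument.
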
 
\begin{proof}
	It follows from \eqref{eqn:cmartingale} that
	\begin{equation}\label{eqn:cmartingale2}
		0\leq \mathbb{E}[V(k+1)|\mathcal{F}(k)]\leq (1-\rho)V(k)+c(k).
	\end{equation}
	By taking expectation on both sides of \eqref{eqn:cmartingale2}, it yields that
	\begin{equation}
		\begin{aligned}
			0&\leq \mathbb{E}\left[V(k+1)\right] \leq\left(1-\rho\right) \mathbb{E}\left[V(k)\right]+c\\
			&\leq \left(1-\rho\right)^{k+1} \mathbb{E}\left[V(0)\right]+c\sum_{t=0}^{k+1}\left(1-\rho\right)^t.
		\end{aligned}
	\end{equation}
The proof is thus completed.
\end{proof}

\vspace{-5pt}

Towards the proof of Theorem~\ref{thm:synchronization}, we shall respectively establish the consistency and consensus conditions.

\textit{Consistency}: By \eqref{eqn:epsilon}, we rewrite the dynamics of local state as
\begin{equation}\label{eqn:update2}
	\begin{aligned}
		\eta_i(k+1) =H\eta_i(k) &+L_iz_i(k)+BT\sum_{j=1}^m a_{ij}(\eta_{j}(k)-\eta_{i}(k))\\&+BT
		\sum_{j=1}^m a_{ij}(\epsilon_{j}(k)-\epsilon_{i}(k)).
	\end{aligned}
\end{equation}
The consistency condition is verified by summing \eqref{eqn:update2} over $i=1,\cdots,m$.

\textit{Consensus}: For simplicity, let us define the aggregated vectors and matrices as below:
\begin{equation*}
	\begin{split}
		\eta(k)&\triangleq 
		\begin{bmatrix}
			\eta_1(k)\\
			\vdots\\
			\eta_m(k)
		\end{bmatrix},\; L_\eta\triangleq 
		\begin{bmatrix}
			L_1 & & \\
			& \ddots & \\
			& & L_m
		\end{bmatrix}.\\
	\end{split}
\end{equation*}

Collecting \eqref{eqn:update2} from each sensor yields:
\begin{equation}\label{eqn:eta_matrix}
	\begin{split}
		&\eta(k+1)\\&=(I_m\otimes H)\eta(k)-[I_m\otimes (BT)](\mathcal{L}_{\mathcal{G}}\otimes I_{n(r+1)})\eta(k)\\&\qquad-[I_m\otimes (BT)](\mathcal{L}_{\mathcal{G}}\otimes I_{n(r+1)})\epsilon(k)+L_\eta z(k)\\
		&=[I_m\otimes H-\mathcal{L}_{\mathcal{G}}\otimes (BT)]\eta(k)-[\mathcal{L}_{\mathcal{G}}\otimes (BT)]\epsilon(k)\\&\qquad+L_\eta z(k),
	\end{split}
\end{equation}
where $z(k)$ is defined in \eqref{eqn:S}. Let us rewrite the average state of all sensors as  
\begin{equation}
	\bar{\eta}(k)= \frac{1}{m} \sum_{i=1}^m \eta_{i}(k)=\frac{1}{m} (\1_m^T\otimes I_{n(r+1)})\eta(k).
\end{equation}
As $\1_m^T \mathcal{L}_{\mathcal{G}} =0$, it follows that
\begin{equation*}\label{eqn:average_matrix}
	\begin{split}
		\bar{\eta}(k+1) &= \frac{1}{m} (\1_m^T\otimes I_{n(r+1)})\Big([I_m\otimes H-\mathcal{L}_{\mathcal{G}}\otimes (BT)]\eta(k)\\&\quad-[\mathcal{L}_{\mathcal{G}}\otimes ( BT)]\epsilon(k)+L_\eta z(k)\Big)\\&=H\bar{\eta}(k)+\frac{1}{m}(\1_m^T\otimes I_{n(r+1)})L_\eta z(k).
	\end{split}
\end{equation*}

\vspace{0pt}

Furthermore, we define for each sensor $i$ that $$\delta_i(k)\triangleq \eta_i(k)-\bar\eta(k).$$ By stacking $\delta_i(k)$ together, let us denote
\begin{equation}
	\delta(k)\triangleq \col(\delta_1(k),\cdots,\delta_m(k)).
\end{equation}
We therefore have
\begin{equation}\label{eqn:error_matrix}
	\begin{split}
		\delta&(k+1)=[I_m\otimes H-\mathcal{L}_{\mathcal{G}}\otimes (BT)]\delta(k)\\&+[(I_{m}-\frac{1}{m}\1_m\1_m^T)\otimes I_{n(r+1)}]L_\eta z(k)-[\mathcal{L}_{\mathcal{G}}\otimes ( BT)]\epsilon(k).
	\end{split}
\end{equation}

\vspace{0pt}

By \cite{you2011network}, there always exists a unitary matrix $$\Phi\triangleq[\frac{1}{\sqrt{m}}\1_m,\phi_2,\cdots,\phi_m],$$ with which the Laplacian matrix can be diagonalized as
$$
\Phi^T\mathcal{L}_{\mathcal{G}}\Phi=\diag(0,\mu_2,\cdots,\mu_m).
$$
One hence concludes
\begin{equation}
	\begin{split}
		&(\Phi \otimes I_{n(r+1)})^T[\mathcal{L}_{\mathcal{G}}\otimes (BT)](\Phi \otimes I_{n(r+1)})
		\\&=\diag(0,\mu_2BT,...,\mu_mBT),\\
		&(\Phi \otimes I_{n(r+1)})^T[I_m\otimes H-\mathcal{L}_{\mathcal{G}}\otimes (B\Gamma)](\Phi \otimes I_{n(r+1)})
		\\&=\diag(H,H-\mu_2BT,..., H-\mu_mBT),
	\end{split}
\end{equation}
which holds by the property of Kronecker product. Denote 
\begin{equation}\label{eqn:delta}
	\begin{split}
		\tilde\delta(k)\triangleq(\Phi\otimes I_{n(r+1)})^T\delta(k),\;
		\tilde\epsilon(k)\triangleq(\Phi\otimes I_{n(r+1)})^T\epsilon(k).
	\end{split}
\end{equation}
Let us further partition $\tilde\delta(k)$ into two parts, i.e., $\tilde\delta(k)=[\tilde\delta^T_1(k),\tilde\delta^T_2(k)]^T$, where $\tilde\delta_1(k)\in\mathbb{R}^{n(r+1)}$ consists of the first $n(r+1)$ entries of $\tilde\delta(k)$. One thus obtains from \eqref{eqn:error_matrix} that
\begin{equation}\label{eqn:trajectory}
	\begin{split}
		\tilde\delta_1(k+1) &= \frac{1}{\sqrt{m}}\sum_{i=1}^m \delta_i(k+1)=0,\\
		\tilde\delta_2(k+1)&=A_{\delta} \tilde\delta_2(k)+L_{z} z(k)+B_{\epsilon} \tilde\epsilon_2(k),
	\end{split}
\end{equation} 
where $A_{\delta} \triangleq \diag(H-\mu_2BT,\cdots, H-\mu_mBT)$, $B_{\epsilon}  \triangleq \diag(-\mu_2 BT,\cdots,-\mu_mBT)$, and $L_{z}$ is formed by the last $mn(r+1)-n(r+1)$ rows of $[(\Phi^T-\frac{1}{m}\Phi^T\1_m\1_m^T)\otimes I_{n(r+1)}]L_\eta$. 


Clearly, $\tilde\delta_1(k+1)$ is stable. We thus focus on the stability of $\tilde\delta_2(k+1)$. To proceed, let us consider the following Lyapunov candidate:
\begin{equation}
	V(k)=\tilde\delta^T_2(k) \mathcal P \tilde\delta_2(k).
\end{equation}
It thus follows that
\begin{equation}\label{eqn:trace}
	\begin{aligned}
		V(k)&=\tr(\tilde\delta^T_2(k) \mathcal P \tilde\delta_2(k)) =\tr(\mathcal P\tilde\delta_2(k) \tilde\delta^T_2(k))\\&\leq \tr(\mathcal P)\tr(\tilde\delta_2(k) \tilde\delta^T_2(k))=\tr(\mathcal P)||\tilde{\delta}_2(k)||^2,
	\end{aligned}
\end{equation}
where the ineqaulity holds since 
$
\tr(\mathcal{A} \mathcal B) \leq \tr(\mathcal A) \tr(\mathcal B)
$ holds for any $\mathcal{A}, \mathcal{B}\geq 0$.
(\hspace{1pt}\cite{coope1994matrix}).
The difference of $V(k)$ along \eqref{eqn:trajectory} is given by
\begin{equation}
	\begin{split}
		&\mathbb{E}[\Delta V(k)|\mathcal{F}(k)]\triangleq \mathbb{E}[V(k+1)-V(k)|\mathcal{F}(k)]\\=&\;\tilde\delta^T_2(k)(A_{\delta}^T \mathcal P A_{\delta}-\mathcal{P})\tilde\delta_2(k)+2\tilde\delta^T_2(k) A_{\delta}^T \mathcal P L_z \mathbb{E}[z(k)|\mathcal{F}(k)]\\&\qquad+2\tilde\delta^T_2(k) A_{\delta}^T \mathcal P B_\delta\tilde{\epsilon}(k)+2\mathbb{E}[z^T(k)|\mathcal{F}(k)] L_z^T \mathcal P B_\delta\tilde{\epsilon}(k)\\&\qquad+L_{z}^T \mathcal PL_{z}\mathbb{E}[z^T(k)z(k)|\mathcal{F}(k)]+\tilde{\epsilon}^T(k) B_{\epsilon}^T \mathcal P B_\epsilon \tilde{\epsilon}(k).
	\end{split}
\end{equation}
In view of Lemma~\ref{lmm:eta}, $A_{\delta}$ is stable. Hence, there exist $\mathcal Q >0$ and $\sigma_1,\sigma_2>0 $ such that
\begin{equation}
	(1+\sigma_1+\sigma_2)A_{\delta}^T \mathcal P A_{\delta}-\mathcal{P} +\mathcal Q =0.
\end{equation}
Now using Young's inequality, one concludes that
\begin{equation*}\label{eqn:Lyapunov}
	\begin{split}
		&\quad\; \mathbb{E}[\Delta V(k)|\mathcal{F}(k)]\\&\leq\tilde\delta^T_2(k)[(1+\sigma_1+\sigma_2)A_{\delta}^T \mathcal P A_{\delta}-\mathcal{P}]\tilde\delta_2(k)\\&\quad+(1+\sigma_2^{-1}+\sigma_3)L_{z}^T \mathcal PL_{z} \mathbb{E}[z^T(k)z(k)|\mathcal{F}(k)]\\&\quad+(1+\sigma_1^{-1}+\sigma_3^{-1})\tilde{\epsilon}^T(k) B_{\epsilon}^T \mathcal P B_\epsilon \tilde{\epsilon}(k)\\&\leq -\lambda_{\min}(\mathcal Q)||\tilde{\delta}_2(k)||^2+c(k)\\&\leq -\frac{\lambda_{\min}(\mathcal Q)}{\tr(\mathcal P)}V(k)+c(k),
	\end{split}
\end{equation*}
where the last inequality holds by \eqref{eqn:trace}. As proved in Lemma~\ref{lmm:localeqv}, $\cov(z_i(k))$ is bounded at any time. Moreover, $||\epsilon(k)||^2$ is also bounded by \eqref{eqn:hbar}. It thus follows that
$
	\mathbb{E}[c(k)]<\infty.
$
In view of Lemma~\ref{lmm:cmartingale}, we conclude that $\mathbb{E}[V(k)]$ is bounded. 
As a result of \eqref{eqn:trace}, $\cov(\tilde\delta_2(k))$ is also bounded. Combining it with \eqref{eqn:delta}, it follows that $\cov[\eta_i(k)-\bar{\eta}(k)]$ is bounded for any $i$, which completes the proof.

\bibliographystyle{IEEEtran}
\bibliography{referenceKF}

 \end{document}